\setlist[enumerate]{leftmargin= 0.5 cm}
\setlist[itemize]{leftmargin=0.3 cm}
\newcommand\subparagraph{%
  \@startsection{subparagraph}{5}
  {\parindent}
  {3.25ex \@plus 1ex \@minus .2ex}
  {-1em}
  {\normalfont\normalsize\bfseries}}
\let\subparagraph\relax
\theoremstyle{definition}
\newtheorem{defi}{Definition}
\theoremstyle{theorem}
\newtheorem{lemma}{Lemma}
\newtheorem{theo}{Theorem}
\declaretheorem[style=definition,name=Example,qed=$\blacksquare$]{Example}
\DeclareMathAlphabet{\mathpzc}{OT1}{pzc}{m}{it} 
\newcounter{procedure}
\begin{document}
\title{Service Overlay Forest Embedding for \\Software-Defined Cloud Networks}
\author{Jian-Jhih Kuo\IEEEauthorrefmark{1}, Shan-Hsiang Shen\IEEEauthorrefmark{1}\IEEEauthorrefmark{2}, Ming-Hong Yang\IEEEauthorrefmark{1}\IEEEauthorrefmark{3}, De-Nian Yang\IEEEauthorrefmark{1}, Ming-Jer Tsai\IEEEauthorrefmark{4} and Wen-Tsuen Chen\IEEEauthorrefmark{1}\IEEEauthorrefmark{4}\\ 
\IEEEauthorrefmark{1}Inst. of Information Science, Academia Sinica, Taipei, Taiwan\\
\IEEEauthorrefmark{2}Dept. of Computer Science \& Information Engineering,\\ National Taiwan University of Science \& Technology, Taipei, Taiwan\\
\IEEEauthorrefmark{3}Dept. of Computer Science \& Engineering, University of Minnesota, Minneapolis MN, USA\\
\IEEEauthorrefmark{4}Dept. of Computer Science, National Tsing Hua University, Hsinchu, Taiwan\\
E-mail: \{lajacky,sshen3,curtisyang,dnyang,chenwt\}@iis.sinica.edu.tw and mjtsai@cs.nthu.edu.tw}
\maketitle

\begin{abstract}
Network Function Virtualization (NFV) on Software-Defined Networks (SDN) can effectively optimize the allocation of Virtual Network Functions (VNFs) and the routing of network flows simultaneously. Nevertheless, most previous studies on NFV focus on unicast service chains and thereby are not scalable to support a large number of destinations in multicast. On the other hand, the allocation of VNFs has not been supported in the current SDN multicast routing algorithms. In this paper, therefore, we make the first attempt to tackle a new challenging problem for finding a service forest with multiple service trees, where each tree contains multiple VNFs required by each destination. Specifically, we formulate a new optimization, named Service Overlay Forest (SOF), to minimize the total cost of all allocated VNFs and all multicast trees in the forest. We design a new $3\rho_{ST}$-approximation algorithm to solve the problem, where $\rho _{ST}$ denotes the best approximation ratio of the Steiner Tree problem, and the distributed implementation of the algorithm is also presented. Simulation results on real networks for data centers manifest that the proposed algorithm outperforms the existing ones by over 25\%. Moreover, the implementation of an experimental SDN with HP OpenFlow switches indicates that SOF can significantly improve the QoE of the Youtube service.
\end{abstract}

\section{Introduction}
\label{sec:intro} 
The media industry is now experiencing a major change that alters user subscription patterns and thereby inspires the architects to rethink the design \cite{FutureMediaServiceArchi}. For example, the live video streaming on Anvato\cite{Anvato} enables online video editing for content providers, ad insertion for advertisers, caching, and transcoding for heterogeneous user devices.
Google has acquired Anvato with the above abundant functions and integrated its architecture into Google Cloud to develop the next-generation Youtube.\footnote{http://www.msn.com/en-us/news/technology/google-buys-a-backbone-for-pay-tv-services/ar-BBu61eB?li=AA4Zoy\&ocid=spartanntp} Therefore, it is envisaged that the next-generation Youtube requires more computation functionalities and resources in the cloud. 
For distributed collaborative virtual reality (VR), it is also crucial to allocate distributed computation resources for important tasks such as collision detection, geometric constraint matching, synchronization, view consistency, concurrency and interest management \cite{VR1, VR2, VR3}. 
Network Function Virtualization (NFV) has been regarded as a promising way \cite{FutureMediaServiceArchi,needForNewBroadcastProduction} that exploits Virtual Machines (VMs) to divide the required function into building blocks connected with a service chain \cite{IntroduceNFV}. A service chain passes through a set of Virtual Network Functions (VNFs) in sequence, and Netflix \cite{Netflix} has adopted AWS \cite{AWS} to support the service chains. Current commercial solutions usually assign an individual service chain for each end user for unicast \cite{SurveyCurrentNfvImplementation,ServiceChain,NFVchainInOpticalDC-JLT,ServiceChaining}. Nevertheless, it is expected that this approach is not scalable because duplicated VNFs and network traffic are involved to serve all users if they require the same content, such as live/linear content broadcast. 
The global consumer research \cite{TVtrend2016} manifests that although the unicast video on demand becomes more and more popular, the live/linear content broadcast and multicast nowadays still account for over 50\% of viewing hours per week, from companies such as Sony Crackle \cite{Crackle} and Pluto TV \cite{PlutoTV}, because it effectively attracts the users through a shared social experience to instantly access the contents. However, currently there is no effective solution to support large-scale content distributions with abundant computation functionalities for content providers and end users.

For scalable one-to-many communications, multicast exploits a tree to replicate the packets in branching routers. Compared with unicast flows, a multicast tree can effectively reduce the bandwidth consumption in backbone networks by over 50\% \cite{BenefitOfMulticast}, especially for multimedia traffic \cite{MulticastAdvantage}. Currently, shortest-path trees are employed by Internet standards (such as PIM-SM \cite{PIM-SM}) because they can be efficiently constructed in a distributed manner. Nevertheless, the routing is not flexible since the path from the source to each destination needs to follow the corresponding shortest path. Recently, the flexible routing for traffic engineering becomes increasingly important with the emergence of Software-Defined Networks (SDNs), whereas centralized computation can be facilitated in an SDN controller to find the optimal routing, such as Steiner Tree \cite{SteinerTreeBestRatio} in Graph Theory or its variations  \cite{reliableMulticastforSDN,multicastTEforSDN}. Thus, multicast traffic engineering has been regarded very promising for SDNs.

Nevertheless, the above approaches and other existing multicast routing algorithms \cite{IntelligentSchedulingMulticastVoD,BestEffortPatchingMulticastTrueVoD} are not designed to support NFV because the nodes (e.g., the source and destinations) that need to be connected in a tree are specified as the problem input. On the contrary, here VNFs are also required to be spanned in a tree for NFV, and the problem is more challenging since VMs also need to be selected, instead of being assigned as the problem input. Moreover, multicast NFV indeed is more complicated when it is necessary to employ multiple multicast trees as a forest for a group of destinations, and this feature is crucial for Content Deliver Networks (CDNs) with multiple video source servers. In this case, the video source also needs to be chosen for each end user \cite{hofmann2005content}.

In this paper, therefore, we make the first attempt to explore the resource allocation problem (i.e., both the VM selection, source selection, and the tree routing) for a \textit{service forest} involving multiple multicast trees, where the path from a source to each destination needs to traverse a sequence of demanded services (i.e., a service chain) in the tree.\footnote{In this paper, we first consider the static multicast, and then clarify the static case is already a good step forward and discuss how to adapt the proposed algorithm to the dynamic case in Sections \ref{subsec: static case} and \ref{subsec: dynamic adjustment}, respectively.}
We formulate a new optimization problem for multi-tree NFV in software defined cloud
networks, named, Service Overlay Forest (SOF), to minimize the total cost of the selected VMs and trees. Given the sources and the destinations with each destination requiring a chain of services, the SOF problem aims at finding an overlay forest that 1) connects each destination to a source and 2) visits the demanded services in selected VMs in sequence before reaching the destinations.

Fig. \ref{fig:example 0} first compares a service tree and a service forest. Fig. \ref{fig: input 0} is the input network with the
cost labeled beside each node and edge to represent the link connection cost
and the VM setup cost, respectively. Assume that there are two destinations 9 and 10, and their
demanded service chain consists of two VNFs, $f_{1}$ and $f_{2}$ in order. 
A Steiner tree in Fig. \ref{fig: ST 0} spanning source node 1 and both destinations incurs the total cost as 34 if VMs 2 and 3 are employed. Note that the edge between VMs 2 and 3 is visited twice to reach destination 10, and the cost of the edge is thus required to be included twice. More specifically, the edge costs from source 1 to VM 3 ($f_1$), from VM 3 to VM 2 ($f_2$), and from VM 2 to destinations 9, 10, are $1$, $3$, $20+1+1+1+3+1+1=28$, respectively. Thus, the edge cost is $1+3+28=32$ while the node cost is $1+1=2$.
By contrast, the cost of a service forest with two trees and four VMs selected is 14 in Fig. \ref{fig: SOF 0}, which significantly reduces the cost by about 60 \%. This example
manifests that consolidating the services in few VMs may not always lead to
the smallest cost because the edges to connect multiple destinations are
also important. Therefore, multiple trees with multiple sources are promising
to further reduce the cost.\footnote{In this paper, we assume that the setup cost for a source node is negligible. The source with the setup cost is further discussed in Appendix \ref{sec: source cost}.} 

In this paper, we first prove that the problem is NP-hard. To investigate the
problem in depth, we will step-by-step reveal the thinking process of the
algorithm design from the single-source case to the general case, and then propose a $3\rho _{ST}$-approximation algorithm,\footnote{Compared with the traditional Steiner Tree problem, the problem considered in this paper is more difficult due to new  SDN/NFV constraints involved. Indeed, several recent research works \cite{reliableMulticastforSDN,multicastTEforSDN} on SDN multicast and NFV service chain embedding (e.g., \cite{ServiceChain,ServiceChaining}) have massive approximation ratios (e.g. $O(|D|)$, where $|D|$ denotes the number of destinations, and $O(|\mathcal{C}|)$, where $|\mathcal{C}|$ denotes the length of demanded service chain). By contrast, the approximation ratio of this paper is $3\rho_{ST}$, where $\rho_{ST}$ is the best approximation ratio of the Steiner Tree problem (e.g., the current best one is 1.39), which is smaller than the above works. Moreover, the simulation results manifest that empirically the performance is very close to the optimal solutions obtained by the proposed Integer Programming formulation.}
named Service Overlay Forest Deployment Algorithm (SOFDA) for the general case, where $\rho _{ST}$ denotes the best approximation ratio of the Steiner Tree problem (e.g., the current best one is 1.39). The single-source case is more difficult than the traditional Steiner tree problem because not only the terminal nodes (i.e., source and destinations) need to be spanned but also a set of VMs is required to be selected and spanned to install VNFs in sequence. Also, the general case is more challenging than the single-source case, because a service tree is necessary to be created for each source, and the \emph{VNF conflict} (i.e., a VM is allocated with too many VNFs from multiple trees) tends to occur in this case. 

Therefore, SOFDA is designed to 1) assign multiple sources for varied trees with multiple VMs, 2) allocate the VMs for each tree to provide a service chain for each destination, and 3) find the routing of each tree to span the selected source, VMs, and destinations. Simulation on real topologies manifests that SOFDA can effectively reduce the total cost for data center networks. In addition, a distributed SOFDA is proposed to support the multi-controller SDNs. Implementation on an experimental SDN for Youtube traffic also indicates that the user QoE can be significantly improved for transcoded and watermarked video streams. The rest of this paper is organized as
follows. The related works are summarized in Section \ref{sec: related work}. We formulate the SOF problem in Section \ref{sec: SOF problem} and design the approximation algorithms in Sections \ref{sec: one source} and \ref{sec: general case}. The distributed algorithm is presented in Section \ref{sec: decentralize}. Some important issues are discussed in Section \ref{sec: discussion}. 
The simulation and implementation results are presented in Section \ref{sec: simulation}, and we conclude this paper in Section \ref{sec: conclusion}.

\begin{figure}[t]
\centering
\subfigure[]{\includegraphics[width=2.9cm]{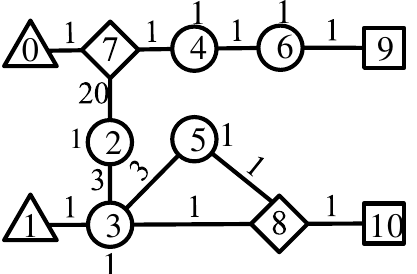}\label{fig: input
0}}  \hfill  \subfigure[]{\includegraphics[width=2.9cm]{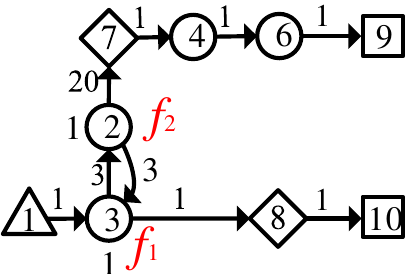}%
\label{fig: ST 0}}  \hfill  \subfigure[]{%
\includegraphics[width=2.9cm]{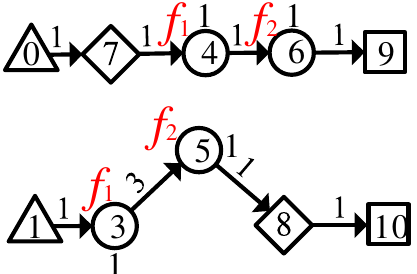}\label{fig: SOF 0}}\\[-2pt]
\subfigure{\includegraphics[width=5.2cm]{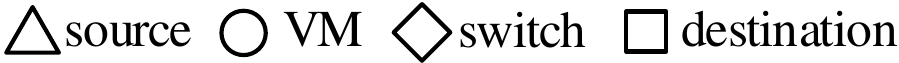}\label{fig:
legend 0}} 
\caption{Comparison of service trees and service forests. (a) Input network. 
(b) Steiner tree with predetermined VMs. (c) Service Overlay Forest.}
\label{fig:example 0}
\end{figure}


\section{Related Work}\label{sec: related work}
Traffic engineering for unicast service chains in SDN has drawn increasing
attention recently. Lukovszki \textit{et al.} \cite{ServiceChain} point out
that the length of a service chain is necessary to be bounded and present an
efficient online algorithm to maximize the number of deployed service
chains, whereas the maximal number of VMs hosted on a node is also
guaranteed. Xia \textit{et al.} \cite{NFVchainInOpticalDC-JLT} jointly
consider the optical and electrical domains and minimize the number of
domain conversions in all service chains. Moreover, Kuo \textit{et al.} \cite%
{ServiceChaining} strike the balance between link utilization and server
usage to maximize the total benefit. Nevertheless, the above studies only
explore unicast routing for service chains and do not support multicast.

Multicast traffic engineering for SDN is more complicated than traditional
unicast traffic engineering. Huang \textit{et al.} \cite%
{ScalableMulticastforSDN} first incorporate the flow table scalability in
the design of the multicast tree routing in SDN. Shen \textit{et al.} \cite%
{reliableMulticastforSDN} then further consider the packet loss recovery in
reliable multicast routing for SDN. Recently, the routing of multiple trees in
SDN \cite{multicastTEforSDN} has been studied to ensure that the routing follows
both the link capacity and the TCAM size. The problem is more challenging
due to the above two constraints, and the best approximation ratio that can
be achieved is only $D$ (i.e., the maximum number of destinations in a
tree). However, the dimension of service allocation in VMs has not been
explored in the above work. Recently, special cases on a tree \cite%
{NFV-multicast,NFV-Tree} with only one source and one VM have been explored. 
Overall, the above approaches are not designed to support a service forest with multiple VNFs and multiple trees, and the problem here is more challenging because \emph{VNF conflict} due to the overlapping of trees will occur. To the best knowledge of the authors, this paper is the first
one that explores both routing and VM selection for multiple trees to
construct a forest in SDN. As explained in Section \ref{sec:intro}, the
service forest is important for many emerging and crucial multimedia
applications in CDN that require intensive cloud computing.


\section{The Service Overlay Forest Problem} \label{sec: SOF problem}
A service overlay forest consists of a set of service overlay trees. Each service overlay tree spans one source, a set of VMs for enabled VNFs, and a subset of destinations. Any two service overlay trees do not overlap since each destination only needs to connect to a source via a service chain in a tree. In the following, we first formally define the problem. We are given:
\begin{enumerate}
\item a network $G=\{V= M \cup U,E\}$, where each link $e \in E$ is associated with a nonnegative cost $c(e)$ denoting the connection cost of link $e$ to forward the demand of destinations, each virtual machine (VM) $v \in M$ is associated with a nonnegative cost $c(v)$ denoting the setup cost of VM $v$ to run a virtual network function (VNF), and each switch $v \in U$ is associated with cost $0$,
\item a set of destinations $D\subseteq V$ requesting the same demand,
\item a set of sources $S\subseteq V$ having the demands of destinations, and
\item a chain of VNFs $\mathcal{C}=(f_1,f_2,\cdots, f_{|\mathcal{C}|})$ required to process the demand of destinations. 
\end{enumerate}

The \textbf{S}ervice \textbf{O}verlay \textbf{F}orest (SOF) problem is to construct a service overlay forest consisting of the service overlay trees with the roots in $S$, the leaves in $D$, and the remaining nodes in $V$, so that there exists a chain of VNFs from a source to each destination. A chain of VNFs is represented by a \textit{walk}, which is allowed to traverse a node (i.e., a VM or a switch) multiple times. In each walk, a \textit{clone} of a node and the corresponding incident links are created to foster an additional one-time pass of the node, and only 
one of its clones is allowed to run VNF to avoid duplicated counting of the setup cost. For example, in the second feasible forest (colored with light gray) of Fig. \ref{fig: feasible solution 1}, a walk from source 1 to destination 8 passes VM 2 twice without running any VNF, and there are two clones of VM 2 on the walk. For each destination $t \in D$, SOF needs to ensure that there exists a path with clone nodes (i.e., a walk on the original $G$) on which $f_1,f_2,\cdots, f_{|\mathcal{C}|}$ are performed in sequence from a source $s \in S$ to $t$ in the service overlay forest.

The objective of SOF is to minimize the total setup and connection cost of the service overlay forest, where the setup and connection costs denote the total cost of the VMs and links, respectively. Note that the cost of a link in $G$ is counted twice if the link is duplicated because its terminal nodes are cloned. In this paper, it is assumed that a VM can run at most one VNF in the network $G$. The scenario that requires a VM to support multiple VNFs can be simply addressed by first replicating the VM multiple times in the input graph $G$. 

\begin{figure}
	\center
	\subfigure[]{\includegraphics[width=4.3cm]{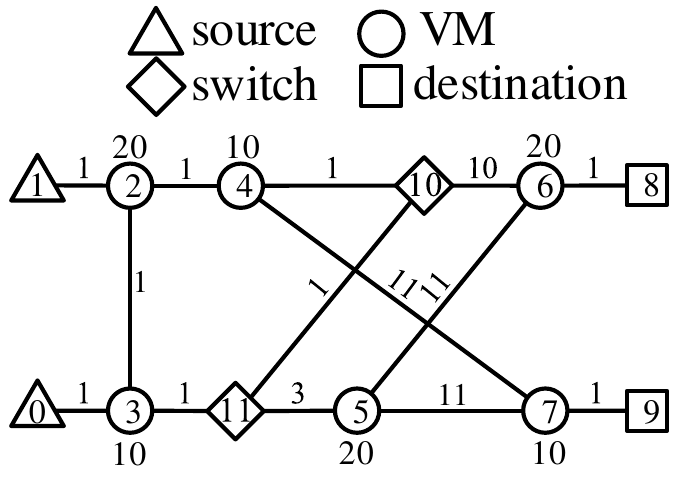}\label{fig: given}}\hfil
	\subfigure[]{\includegraphics[width=4.3cm]{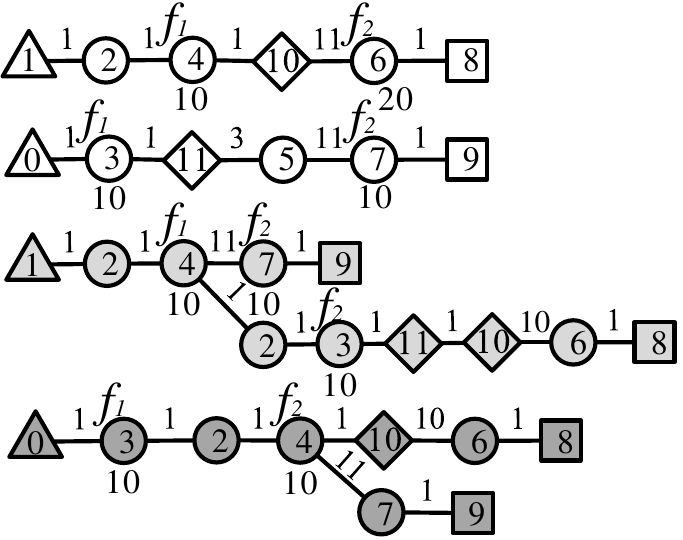}\label{fig: feasible solution 1}}\hfil
	\caption{Example of service overlay forests. (a) The input network $G$. (b) The service overlay forests with $\mathcal{C}=(f_1,f_2)$ constructed for $G$.} \label{fig: example1}
\end{figure}
\begin{Example}
Fig. \ref{fig: example1} presents three examples for the service overlay forests. The first service overlay forest consists of two service overlay trees, where the demand of destination $8$ (or $9$) is routed from source $1$ (or $0$) along the walk $(1,2, 4, 10, 6, 8)$ (or $0, 3, 11, 5, 7, 9$), and the demand is processed by VNFs $f_1$ and $f_2$ at VMs $4$ and $6$ (or $3$ and $7$), respectively. The total cost of the first service overlay forest is $82$, where the setup cost and connection cost are $50$ and $32$, respectively. In the second service overlay forest (including only one tree), source 1 first routes the demand to VM 4 for VNF $f_1$. Subsequently, VM 4 forwards the demand to VM 7 and VM 2 for VNF $f_2$, respectively. Finally, the demand is forwarded towards destinations $8$ and $9$, respectively. The setup cost and connection costs of the second service overlay forest are $30$ and $29$, respectively. In the third service overlay forest (tree), the demand is first routed from source $1$ to VM $3$ for VNF $f_1$ and then toward VM $4$ for VNF $f_2$, and finally to destinations $8$ and $9$, respectively. The third service overlay forest is an optimal service overlay forest with the setup cost and connection cost as $20$ and $27$, respectively.
\end{Example}

\subsection{Integer Programming} 
\label{subsec: IP}
In the following, we present the Integer Programming (IP) formulation for
SOF. Our IP formulation first identifies the service chain 
for each destination and then constructs the whole service forest accordingly. To find
the walk of the service chain, we first assign the VMs corresponding to
each VNF in the walk and then find the routing of the walk between every
two consecutive VMs. More specifically, SOF includes the
following binary decision variables. Let $\gamma _{d,f,u}$ denote if node $u$
is assigned as the enabled VM for VNF $f$ in the walk to
destination $d$. Let $\pi _{d,f,u,v}$ denote if edge $e_{u,v}$ is located in
the walk connecting the enabled VM of VNF $f$ and the enabled
VM of the next VNF $f_{N}$. 
Note that the above walk will
belong to a service tree rooted at the enabled candidate node of VNF $f$. 
Therefore, to find the service forest for $f$, let binary variable $\tau
_{f,u,v}$ represent if edge $e_{u,v}$ is located in the forest. On the other
hand, binary variable $\sigma _{f,u}$ represents if node $u$ is assigned as
the enabled VM of service $f$ for the whole service forest.
Notice that each destination $d$ may desire a different VNF $f$ on the
same enabled VM $u$ according to $\gamma _{d,f,u}$, but the
constraint later in this section will ensure that only one VNF is
allowed to be allocated to $u$ by properly assigning $\sigma _{f,u}$
accordingly.

The objective function for SOF is as follows.%
\begin{equation*}
\min \sum \limits_{f\in \mathcal{C}}\sum \limits_{u\in V}c(u)\sigma _{f,u}+\min
\sum \limits_{f\in \mathcal{C}}\sum \limits_{e_{u,v}\in E}c(e_{u,v})\tau _{f,u,v},
\end{equation*}%
where the first term represents the total setup cost of all VMs, and the
second term is the connection cost of the service forest. The IP formulation
contains the following constraints.

1) Service Chain 
Constraint. The following four constraints first assign the
enabled VM for each service chain. 

\begin{center}
\begin{tabular}{cc}
$\sum \limits_{s\in S}\gamma _{d,f_{S},s}=1$, $\forall d\in D,$ & (1) \\ 
\vspace{2pt} $\sum \limits_{u\in M}\gamma _{d,f,u}=1$, $\forall d\in D,f\in \mathcal{C},
$ & (2) \\ 
$\gamma _{d,f_{D},d}=1$, $\forall d\in D,$ & (3) \\ 
$\gamma _{d,f_{D},u}=0$, $\forall d\in D,u\in V-\{d\}.$ & (4)%
\end{tabular}
\end{center}

Constraint (1) ensures that each destination chooses one source $s$ in $S$
as its service source, where $f_{S}$ denotes the function as the source of
the service chain. 
Constraint (2) finds a node $u$ from $M$ as the enabled
VM of each VNF $f$ for each destination. Constraints (3) and (4) assign only
destination $d$ for function $f_{D}$, where $f_{D}$ denotes the function as
the destination of the service chain. 
Here notations $f_{S}$ and $f_{D}$ are
incorporated in our IP formulation in order to support the routing
constraints described later. In other words, a service chain 
traverses the
nodes with $f_{S},f_{1},...,f_{\left \vert \mathcal{C}\right \vert },f_{D}$ sequentially.

2) Service Forest Constraint. The following two constraints assign the
enabled VM for the whole service forest.

\begin{center}
\begin{tabular}{cc}
\vspace{2pt} $\gamma _{d,f,u}\leq \sigma _{f,u}$, $\forall d\in D,f\in
\mathcal{C},u\in V,$ & (5) \\ 
\vspace{2pt} $\sum \limits_{f\in \mathcal{C}}\sigma _{f,u}\leq 1$, $\forall u\in V,$ & 
(6)%
\end{tabular}
\end{center}

Constraint (5) assigns $u$ as the enabled VM of VNF $f$ for
the whole service forest if $u$ has been selected by at least one
destination $d$ for VNF $f$. Constraint (6) ensures that each node $u$
is in charge of at most one VNF.

3) Chain 
and Forest Routing Constraints. The following two constraints find
the routing of the whole service forest.

\begin{center}
\begin{tabular}{cc}
$\sum \limits_{v\in N_{u}}\pi _{d,f,u,v}-\sum \limits_{v\in N_{u}}\pi
_{d,f,v,u}\geq \gamma _{d,f,u}-\gamma _{d,f_{N},u}$, &  \\ 
$\forall d\in D,f\in \mathcal{C}\cup \{f_{S}\},u\in V,$ & (7) \\ 
$\pi _{d,f,u,v}\leq \tau _{f,u,v}$, $\forall d\in D,f\in \mathcal{C}\cup
\{f_{S}\},e_{u,v}\in E.$ & (8)%
\end{tabular}
\end{center}

Constraint (7) is the most complicated one. It first finds the routing of the service chain 
for each
destination $d$. For the source $u$ of a service chain, 
$\gamma _{d,f_{S},u}=1
$ and $\gamma _{d,f_{N},u}=\gamma _{d,f_{1},u}=0$, where $f_{1}$ is the
first VNF in $\mathcal{C}$. In this case, the constraint becomes 
\begin{equation*}
\sum \limits_{v\in N_{u}}\pi _{d,f_{S},u,v}-\sum \limits_{v\in N_{u}}\pi
_{d,f_{S},v,u}\geq 1.
\end{equation*}%
It ensures that at least one edge $e_{u,v}$ incident from $u$ is selected
for the service chain 
because no edge $e_{v,u}$ incident to $u$ is chosen
(i.e., $\sum \limits_{v\in N_{u}}\pi _{d,f_{S},v,u}=0$ for the source $u$).
By contrast, for any intermediate switch $u$ in the walk from the source to
the enabled VM of $f_{1}$, $\gamma _{d,f_{S},u}=0$ and $\gamma
_{d,f_{1},u}=0$, and the constraint becomes%
\begin{equation*}
\sum \limits_{v\in N_{u}}\pi _{d,f_{S},v,u}\leq \sum \limits_{v\in N_{u}}\pi
_{d,f_{S},u,v}.
\end{equation*}%
When any edge $e_{v,u}$ incident to $u$ has been chosen in the walk, the
above constraint states that at least one edge $e_{u,v}$ incident from $u$
must also be selected in order to construct the service chain 
iteratively.
The above induction starts from the source of the walk to the previous node
of the enabled VM of $f_{1}$. Afterward, for the enabled VM
$u$ of $f_{1}$, $\gamma _{d,f_{1},u}=1$ and $\gamma _{d,f_{S},u}=0$,
and the constraint becomes 
\begin{equation*}
\sum \limits_{v\in N_{u}}\pi _{d,f_{S},v,u}-\sum \limits_{v\in N_{u}}\pi
_{d,f_{S},u,v}\leq 1.
\end{equation*}%
Since $\pi _{d,f_{S},v,u}=1$ for only one edge $e_{v,u}$ in the walk
incident to $u$, the above constraint is identical to $\sum \limits_{v\in
N_{u}}\pi _{d,f_{S},u,v}\geq 0$. Therefore, $\pi _{d,f_{S},u,v}$ is allowed
to be $0$ for every edge $e_{u,v}$ to minimize the objective function,
implying that no data of $f_{S}$ will be sent from the enabled VM
of $f_{1}$. By contrast, $\pi _{d,f_{1},u,v}$ will be $1$ for one edge $%
e_{u,v}$ due to constraint (6), implying that the enabled VM of $%
f_{1}$ will deliver the data in one edge incident from $u$, and the above
induction repeats sequentially for every service $f$ in $\mathcal{C}$ until it reaches
the destination $d$. Finally, constraint (8) states that any edge $e_{u,v}$ is
in the service forest if it is 
in the service chain 
for at least one
destination $d$.

\subsection{The Hardness}
The SOF problem is NP-hard since a metric version of the Steiner Tree problem (see Definition \ref{defi: ST}) can be reduced to the SF problem in polynomial time. 
The complete proof is presented in Appendix \ref{sec: hardness}.

\begin{defi}
\cite{SteinerTreeBestRatio} Given a weighted graph $G=\{V,E\}$ with 
edge costs, a root $r\in V$ and a node set $U\subseteq V\setminus\{r\}$, a Steiner Tree is a minimum spanning tree that roots at $s$ and spans all the nodes in $U$, where $U\neq\emptyset$. \label{defi: ST}
\end{defi}
\begin{theo}
The SOF problem is NP-hard. \label{theo: NP-hard}
\end{theo}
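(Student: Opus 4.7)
The plan is to reduce the Steiner Tree problem of Definition~\ref{defi: ST}, which is classically NP-hard, to SOF in polynomial time. Given an instance of Steiner Tree consisting of a weighted graph $G=(V,E)$ with edge costs $c$, a root $r\in V$, and a terminal set $U\subseteq V\setminus\{r\}$, I would construct the following SOF instance on the same graph and with the same edge costs: take $M=\{r\}$ so that $r$ is the unique VM, assign every remaining vertex as a switch, fix the setup cost $c(r)=0$, let $S=\{r\}$ and $D=U$, and take the service chain $\mathcal{C}=(f_1)$ to consist of a single VNF. This construction is clearly polynomial in the size of the Steiner Tree input.

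I would then argue that the optimal SOF cost on the constructed instance equals the optimal Steiner Tree cost. For the forward direction, any Steiner tree $T$ rooted at $r$ spanning $U$ induces a feasible SOF by placing $f_1$ at $r$ (cost $0$) and letting each destination's walk follow its unique path in $T$ from $r$; the resulting SOF has total cost equal to the edge cost of $T$. For the reverse direction, note that because $M=\{r\}$, every destination's walk must assign $f_1$ to $r$, so the setup cost of any feasible SOF vanishes and its total cost equals the connection cost. The union of the walks from $r$ to each $d\in D$ yields a connected subgraph spanning $\{r\}\cup U$, whose total edge cost (counted with multiplicity from repeated traversal) lower-bounds the SOF cost and upper-bounds the optimum Steiner tree cost after contracting any repeated or redundant edges.

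The main obstacle I foresee is handling the walk semantics cleanly: SOF walks may revisit nodes via clones and re-traverse edges, with each traversal charged separately, so \emph{a priori} one might worry that a clever routing could beat the Steiner optimum. This concern is neutralized by a direct exchange argument: since $r$ is the unique source and the unique VM, any repeated edge traversal in a feasible SOF can be excised without breaking feasibility, and doing so strictly reduces the connection cost. Hence the optimum SOF is witnessed by a simple tree rooted at $r$ and spanning $U$, matching the Steiner Tree optimum. Combined with the forward direction, this shows the two optima coincide, which together with the polynomial construction yields the desired reduction and proves that SOF is NP-hard.
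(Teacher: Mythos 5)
Your reduction is correct and is essentially the paper's own argument: both reduce from the Steiner Tree problem by making the root $r$ the unique VM, setting $|\mathcal{C}|=1$ and $D=U$, and showing that the SOF and Steiner optima coincide (up to an additive constant). The only cosmetic difference is that the paper attaches a separate source node to $r$ via an edge of arbitrary positive cost $w$ and proves $OPT_{G'}=OPT_G+w$, whereas you let $r$ serve as both the source and the VM; your explicit handling of the walk/clone semantics (showing re-traversals cannot beat a tree) is a welcome tightening of a step the paper leaves implicit.
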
 



\section{Special Case with Single Tree} \label{sec: one source}
In this subsection, we propose a $(2+\rho _{ST})$-approximation algorithm,
named Service Overlay Forest Deployment Algorithm with Single Source
(SOFDA-SS) to explore the fundamental characteristics of the problem, and a more complicated algorithm for the general case with multiple sources will be presented in the next section. 
SOFDA-SS includes the following two phases. The first phase chooses the most suitable VM to install the last VNF (i.e., called last VM in the rest of this paper) and then finds a minimum-cost service chain\footnote{The next section will extend the service chain into a service tree with multiple last VMs.} between the source and the last VM. 
Afterward, the second phase finds a minimum-cost Steiner tree to span the VM and all the destinations. The selection of the last VM is crucial due to the following trade-offs. Choosing a VM closer to the source tends to generate a shorter service chain, but it may create a larger service tree if the last VM is distant from all destinations. Also, it is important to address the trade-off between the setup cost and connection cost, because a VM with a smaller setup cost will sometimes generate a larger tree.  
The pseudo code of SOFDA-SS is presented in Appendix \ref{sec: pseudo code} (see Algorithm \ref{Algo: SOF_one_candidate_source}) .

Therefore, to achieve the approximation ratio, it is necessary for SOFDA-SS to carefully examine every possible VM to derive a Steiner tree and evaluate the corresponding cost. 
For every VM $u$, to obtain a walk $W_{G}$ (i.e., service chain) from source $s$ to $u$ with $|\mathcal{C}|$ VMs (so that the VNFs $f_{1},f_{2},\cdots ,f_{|\mathcal{C}|}$
can be installed in sequence) in $G$, we first propose a graph transformation from $G$ to $\mathcal{G}$ and then find the $k$-stroll \cite{k-MST03} from $s$ to $u$ defined as follows.

\begin{defi}
Given a weighted graph $\mathcal{G}=\{\mathcal{V},\mathcal{E}\}$ and two
nodes $s$ and $u$ in $\mathcal{V}$, the $k$-stroll problem is to find the
shortest walk that visits at least $k$ distinct nodes (including $s$ and $u$%
) from $s$ to $u$ in $\mathcal{G}$. \label{defi: k-stroll}
\end{defi}


SOFDA-SS constructs an instance $\mathcal{G}=\{\mathcal{V},\mathcal{E}\}$ of the $k$-stroll problem from $G$ as follows. Let $\mathcal{V}$ consist of $s$ and all VMs in $G$ (i.e., $\mathcal{V}=M\cup \{s\}$). Let $\mathcal{E}$ contain all edges
between any two nodes in $\mathcal{V}$ (i.e., $\mathcal{G}$ is a complete graph). 
The cost of the edge between nodes $v_1$ and $v_2$ in $\mathcal{E}$ is defined as follows,
\vspace{-0.1cm}
\begin{equation}\notag
c(v_1,v_2)=\sum_{(a,b) \in P}c((a,b))+
\begin{cases}
\begin{array}{ll}
\frac{c(u)+c(v_2)}{2} & \text{if } v_1=s,\\
\frac{c(v_1)+c(u)}{2} & \text{else if } v_2=s,\\
\frac{c(v_1)+c(v_2)}{2} & \text{otherwise},
\end{array}
\end{cases}
\end{equation}
where $u$ and $P$ denote the last VM and the shortest path between nodes $v_1$ and $v_2$ in $G$, respectively. In other words, the cost of each shortest path in $G$ is first included in the cost of the corresponding edge in $\mathcal{E}$. Afterward, since the data always enter and leave the VM running an intermediate VNF ($\neq f_{|\mathcal{C}|}$), the setup cost of the VM is shared by the incoming and outgoing edges of the VM. Finally, the setup cost of last VM $u$ is shared by the outgoing edge of $s$ and the incoming edge of $u$. The edge costs of $\mathcal{G}$ are assigned in the above way to ensure that the shortest walk with $|\mathcal{C}|$ VMs in $G$ is identical to the shortest path with $|\mathcal{C}|+1$ nodes in $\mathcal{G}$.

Clearly, $\mathcal{G}$ can be constructed in polynomial time. Then, SOFDA-SS finds a $k$-stroll walk $W_{\mathcal{G}}^{\prime }$ that visits exactly $|\mathcal{C}|+1$ distinct
nodes from source $s$ to the last VM $u$ (i.e., $k=|\mathcal{C}|+1$) in $\mathcal{G}$. Then, SOFDA-SS finds the corresponding walk $W_{G}$ (i.e., a service chain from $s$ to $u$ in $G$) that visits exactly $|\mathcal{C}|$ distinct VMs in $G$ by concatenating each shortest path corresponding to a selected edge in $|\mathcal{C}|$, and each path connects two consecutive nodes, $u_{j}$ and $u_{j+1}$, on walk $W_{\mathcal{G}}$, where $1\leq j\leq |\mathcal{C}|$. Finally, the demanded VNFs $f_1,f_2,...,f_{|\mathcal{C}|}$ can be deployed in order on the walk with $|\mathcal{C}|$ VMs from $s$ to $u$. 

\begin{Example}
Fig. \ref{fig: example2} presents an illustrative example for SOFDA-SS. First, for VM $7$, the walk $W_G=(u_1,u_2,\cdots, u_{|\mathcal{C}|+1})$ with $u_1=1$ and $u_{|\mathcal{C}|+1}=7$ is obtained as follows. An instance $\mathcal{G}=\{\mathcal{V},\mathcal{E}\}$ of the $k$-stroll problem is first constructed with $s=1$, $M=\{2,3,4,5,6,7\}$, and $u=7$, where $\mathcal{V}$ is set to $\{1,2,3,4,5,6,7\}$, $\mathcal{E}$ is set as $\{(x,y)|x,y \in \mathcal{V}\}$, the cost of the edge between nodes $1$ and $6$ is set to $c((1,2))+c((2,4))+c((4,6))+\frac{c(5)+c(6)}{2}=14$, and the cost of the edge between nodes $2$ and $6$ is set as $c((2,4))+c((4,6))+\frac{c(2)+c(6)}{2}=13$. Subsequently, we acquire a walk $W'_\mathcal{G}=W_\mathcal{G}=(1,2,4,3,5,7)$ in $\mathcal{G}$ and the corresponding walk $W_G=(1,2,4,2,3,5,7)$ in $G$. After $W_G$ is obtained, the service overlay forest with the last VM (i.e., 7) is constructed, where the demand is first routed from source $1$ to VM $7$ along the walk $W_G=(1,2,4,2,3,5,7)$, and $f_1$, $f_2$, $f_3$,$f_4$, and $f_5$ is processed at VMs $2$, $4$, $3$, $5$, and $7$, respectively. After finding the Steiner tree rooted at VM 7, the demand is then routed to destination $8$ by traversing switches $4$ and $6$, and directly to destination $9$. The total cost in the end of the second phase is 45.
\end{Example}

\begin{figure}
	\center
	\subfigure[]{\includegraphics[width=2cm]{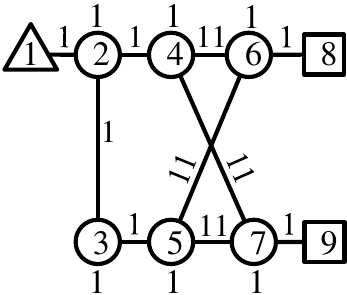}\label{fig: given}}\hfil
	\subfigure[]{\includegraphics[width=3.45cm]{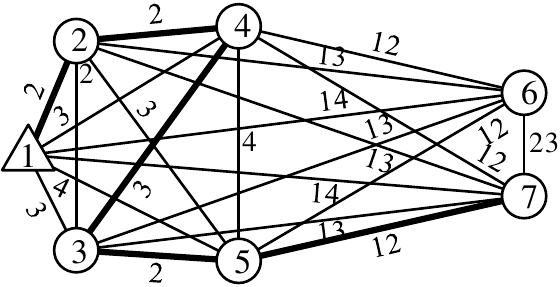}\label{fig: k-stroll instance}}\hfil
	\subfigure[]{\includegraphics[width=3.5cm]{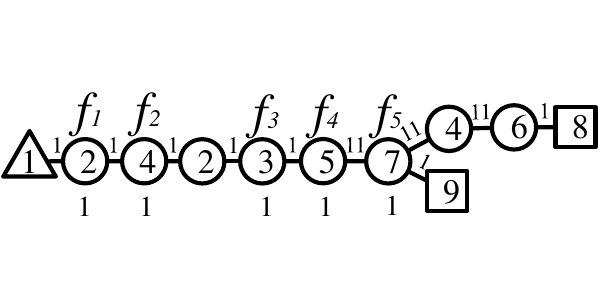}\label{fig: single chain solution}}
	\caption{Example of service overlay forest by SOFDA-SS. (a) The input network $G$. (b) The constructed instance of the k-Stroll problem $\mathcal{G}$, where the walk $W_\mathcal{G}$ between nodes 1 and 5 
is shown in bold. (c) The service overlay forest with $\mathcal{C}=(f_1,f_2,f_3,f_4,f_5)$ constructed for $G$.} \label{fig: example2}
\end{figure}

In the following, we present several important characteristics for graph $\mathcal{G}$, which play crucial roles to derive the approximation ratio. First, the cost of a walk $(u_{1},u_{2},\cdots ,u_{|\mathcal{C}|+1})$ from $s=u_{1}$ to the last VM $u=u_{|\mathcal{C}|+1}$ without traversing a node multiple times in $\mathcal{G}$ is equal to the sum of the total setup cost of $u_{2},u_{3},\cdots ,u_{k}$, plus the total connection cost of the shortest paths between every $u_{j}$ and $u_{j+1}$ for $j=1,2,\cdots ,k-1$ in $G$. 
Second, the edge costs in $\mathcal{G}$ satisfy the triangular inequality, as described in the following lemma. For readability, the detailed proof is presented in Appendix \ref{sec: triangular inequality}.

\begin{lemma} \label{lemma: triangular inequality}
The graph $\mathcal{G}$ satisfies triangular inequality.
\end{lemma}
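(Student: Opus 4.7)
\smallskip
\noindent\textbf{Proof plan.} The plan is to verify $c(v_1,v_3)\le c(v_1,v_2)+c(v_2,v_3)$ for arbitrary distinct $v_1,v_2,v_3\in\mathcal{V}$ by decomposing each edge cost into two independent contributions and showing the inequality holds for each. Write $c(v_i,v_j)=c_P(v_i,v_j)+c_N(v_i,v_j)$, where $c_P(v_i,v_j):=\sum_{(a,b)\in P_{ij}}c((a,b))$ is the cost of the shortest $v_i$--$v_j$ path $P_{ij}$ in the original graph $G$, and $c_N(v_i,v_j)$ is the half--setup--cost term from the case split in the definition of $c(\cdot,\cdot)$. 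It then suffices to prove the triangle inequality separately for $c_P$ and $c_N$ and add the two inequalities.

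For the path part, the triangle inequality is immediate: concatenating the shortest $v_1$--$v_2$ path with the shortest $v_2$--$v_3$ path in $G$ produces a $v_1$--$v_3$ walk whose total edge cost is $c_P(v_1,v_2)+c_P(v_2,v_3)$, and the shortest $v_1$--$v_3$ path cannot be longer, so $c_P(v_1,v_3)\le c_P(v_1,v_2)+c_P(v_2,v_3)$. Note that only the edge costs along the path enter $c_P$; any intermediate VM setup costs on the path are irrelevant here, which keeps this step clean.

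For the node part, I would do a case analysis depending on how many of $v_1,v_2,v_3$ equal $s$. Since $v_1,v_2,v_3$ are distinct, at most one of them equals $s$. The four cases are: (i) none equals $s$; (ii) $v_1=s$; (iii) $v_2=s$; (iv) $v_3=s$. In each case, substitute the corresponding branch of the piecewise definition into $c_N(v_1,v_2)+c_N(v_2,v_3)-c_N(v_1,v_3)$ and simplify. In cases (i), (ii), and (iv) the half--costs of $v_1$ and $v_3$ cancel and the residue is exactly $c(v_2)\ge 0$; in case (iii) the costs of $v_1$ and $v_3$ again cancel, the two copies of $\tfrac{c(u)}{2}$ on the left combine to $c(u)$, and the residue is exactly $c(u)\ge 0$. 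Either way the difference is nonnegative, establishing the triangle inequality for $c_N$.

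Adding the two inequalities yields $c(v_1,v_3)\le c(v_1,v_2)+c(v_2,v_3)$ for all choices of $v_1,v_2,v_3$. The only potential pitfall is the asymmetric treatment of $s$ in the piecewise definition, and in particular checking case (iii) — where $v_2=s$ is the ``middle'' vertex — to make sure the two occurrences of $\tfrac{c(u)}{2}$ combine correctly rather than double-counting; once the case split is laid out carefully this becomes a one-line arithmetic verification, so I do not expect any substantive obstacle beyond the bookkeeping.
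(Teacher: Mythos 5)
Your proof is correct, and it is in fact more complete than the paper's own argument. The paper's proof reasons only about the connection-cost (shortest-path) part: it argues that if the triangle inequality failed in $\mathcal{G}$, then the shortest-path distances in $G$ would violate the triangle inequality, a contradiction. But that implication is only valid once one already knows that the half-setup-cost terms themselves satisfy the triangle inequality --- which the paper never checks. Your decomposition $c = c_P + c_N$ makes this explicit and supplies exactly the missing piece: observing that $c_N(x,y) = \tfrac{\tilde c(x)+\tilde c(y)}{2}$ for the modified node cost $\tilde c$ (with $\tilde c(s)=c(u)$ and $\tilde c(x)=c(x)$ otherwise), the difference $c_N(v_1,v_2)+c_N(v_2,v_3)-c_N(v_1,v_3)$ collapses to $\tilde c(v_2)\ge 0$ in every one of your four cases, including the delicate case $v_2=s$ where the two copies of $\tfrac{c(u)}{2}$ combine to $c(u)$. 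Adding this to the standard path-concatenation bound for $c_P$ gives the lemma. So the two arguments share the same core idea for the path part, but yours closes a genuine gap in the published proof; the only cost is the short case analysis, which you have carried out correctly.
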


Let $c(\mathcal{F}_{M}^{OPT})$ and $c(\mathcal{F}_{E}^{OPT})$ denote the setup and connection costs of the optimal service overlay forest $%
\mathcal{F}^{OPT}$, respectively. Based on the above two
characteristics, the following theorem derives the approximation ratio of SOFDA-SS.
The complete proof is presented in Appendix \ref{sec: SOFDA-SS ratio}.


\begin{theo} \label{theo: SOFDA-SS ratio}
	The cost of $F$ is bounded by $(2+\rho_{ST})c(\mathcal{F}^{OPT})$. That is, SOFDA-SS 
 is a $(2+\rho_{ST})$-approximation algorithm for the SOF problem with one tree. \label{theo: exist feasible}
\end{theo}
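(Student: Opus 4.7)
The plan is to show that when SOFDA-SS considers the same last VM $u^*$ that $\mathcal{F}^{OPT}$ uses for $f_{|\mathcal{C}|}$, the resulting solution already costs at most $(2+\rho_{ST})c(\mathcal{F}^{OPT})$. Since SOFDA-SS enumerates every VM as a candidate last VM and returns the cheapest output, the same bound transfers to $F$. First I would split $\mathcal{F}^{OPT}$ into two pieces relative to $u^*$: the service chain $C^*$ from $s$ to $u^*$ that passes through the $|\mathcal{C}|-1$ intermediate VMs (and carries all VM setup cost), and the subtree $T^*$ rooted at $u^*$ that spans every destination in $D$. These pieces partition the forest, so $c(C^*)+c(T^*)=c(\mathcal{F}^{OPT})$, and hence each piece is individually at most $c(\mathcal{F}^{OPT})$.

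Next I would bound the chain phase. SOFDA-SS obtains its chain by solving $k$-stroll with $k=|\mathcal{C}|+1$ on the auxiliary graph $\mathcal{G}$, which by Lemma \ref{lemma: triangular inequality} is metric, and then expanding each $\mathcal{G}$-edge into its underlying $G$-shortest path. By the edge-cost construction of $\mathcal{G}$---shortest-path cost plus the half-setup splits, with the last VM $u^*$'s setup absorbed into the edges incident to $s$ and $u^*$---the cost of any $\mathcal{G}$-walk on $|\mathcal{C}|+1$ distinct nodes equals the total setup-plus-connection cost of the corresponding $G$-chain. Extracting the VM-visit sequence of $C^*$ yields a feasible $k$-stroll walk in $\mathcal{G}$ of cost at most $c(C^*)$, so the optimal $k$-stroll value is at most $c(C^*)$; invoking the standard $2$-approximation for $k$-stroll on metric graphs then gives a chain of cost at most $2c(C^*)$.

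For the tree phase, SOFDA-SS applies a $\rho_{ST}$-approximation to the Steiner-tree instance on terminals $\{u^*\}\cup D$; but $T^*$ is itself a feasible Steiner tree for this terminal set, so the returned tree has cost at most $\rho_{ST}c(T^*)$. Summing the two phases yields
\[
c(F)\leq 2c(C^*)+\rho_{ST}c(T^*)\leq (2+\rho_{ST})c(\mathcal{F}^{OPT}),
\]
as claimed.

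The main obstacle will be the chain bound. Executing it cleanly requires verifying that the halved setup-cost encoding on $\mathcal{G}$'s edges really aggregates to the full chain cost in $G$---handling the endpoint effects at $s$ and $u^*$ as well as the special counting that arises when a VM is revisited via a clone---and that the $\mathcal{G}$-walk extracted from $C^*$ meets the $|\mathcal{C}|+1$-distinct-node requirement for $k$-stroll feasibility. The Steiner-tree half is routine once an off-the-shelf $\rho_{ST}$-approximation is invoked, and the enumeration over candidate last VMs makes the final step immediate.
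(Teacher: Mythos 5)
Your proof follows essentially the same route as the paper's: fix the candidate last VM equal to one used by $\mathcal{F}^{OPT}$, bound the chain phase by $2c(\mathcal{F}^{OPT})$ via the metric $k$-stroll $2$-approximation on $\mathcal{G}$ (Lemma \ref{lemma: triangular inequality}), bound the tree phase by $\rho_{ST}c(\mathcal{F}^{OPT})$ via the Steiner-tree approximation, and finish by the enumeration over all candidate last VMs. One caution: your claim that $C^*$ and $T^*$ partition $\mathcal{F}^{OPT}$ so that $c(C^*)+c(T^*)=c(\mathcal{F}^{OPT})$ can fail when the optimal forest runs $f_{|\mathcal{C}|}$ on several distinct VMs, since the subgraph connecting $u^*$ to \emph{all} destinations may then have to reuse edges of the chain (the paper explicitly notes $\mathcal{F}^{OPT}(u)$ may span $s$ and the other last VMs); however, only the one-sided bounds $c(C^*)\le c(\mathcal{F}^{OPT})$ and $c(T^*)\le c(\mathcal{F}_{E}^{OPT})\le c(\mathcal{F}^{OPT})$ are needed for your final inequality, and these hold regardless.
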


\textbf{Time Complexity Analysis.}
 SOFDA-SS constructs $|M|$ instances of the $k$-stroll problem, and each of them employs the Dijkstra algorithm $|M|$ times to compute the edge costs of each instance, where $O(T_d)$ denotes the time to run the Dijkstra algorithm. Moreover, let $O(T_k)$ denote the time to solve a $k$-stroll instance \cite{k-MST03}, and let $O(T_s)$ represent the time to append a Steiner tree by \cite{SteinerTreeBestRatio}. Therefore, the overall time complexity 
is $O(|M|(T_d|M|+T_k+T_s))$.

\section{General Case With Multiple Trees} \label{sec: general case}
In this section, we propose a $3\rho _{ST}$-approximation algorithm, named
Service Overlay Forest Deployment Algorithm (SOFDA), for the general SOF
problem with multiple sources. Different from SOFDA-SS, here we select
multiple sources to exploit multiple trees for further reducing the total
cost, and it is necessary to choose a different subset of destinations for each source to form a forest. In other words, both the last VMs and the set of destinations are necessary to be carefully chosen for the tree corresponding to each source. To effectively solve the above problem, our idea is to identify a short
service chain from each source to each destination as a candidate service
chain and then encourage more destinations to merge their service chains into a service tree,
and those destinations will belong to the same tree in this case. More
specifically, SOFDA first constructs an auxiliary graph $\mathbb{G}$ with each
candidate service chain represented by a new virtual edge connecting the
source and the last VM of the chain. Also, every source is connected to a
common virtual source. SOFDA finds a Steiner tree spanning the virtual source and all
destinations, and we will prove that the cost of the tree in $\mathbb{G}$ is no
greater than $3\rho_{ST}c(\mathcal{F}^{OPT})$.

Nevertheless, a new challenge arises here because the service chains corresponding to the selected virtual edges in the above approach may overlap in a few nodes in $G$, and the solution thereby is not feasible if any overlapping node in this case needs to support multiple VNFs (see the definition of SOF). SOFDA in Section \ref{subsec: service overlay tree} thereby revises the above
solution into multiple feasible trees, and we prove that SOFDA can still maintain the desired approximation in Section \ref{subsec: the general case}. 
The pseudo code of SOFDA is presented in Appendix \ref{sec: pseudo code} (see Algorithm \ref{Algo: SOF}).


\subsection{Cost-Bounded Steiner Tree}\label{subsec: the general case}
SOFDA first constructs an auxiliary graph $\mathbb{G}$ to effectively extract multiple service chains and group the destinations. Specifically, let $\mathbb{V_{S}}$ consist of the duplicate $\hat{v}$ of each source $v\in S$, and let $\mathbb{V_{M}}$ contain the duplicate $\hat{v}
$ of each VM $v\in M$. Therefore, $\mathbb{V}=V\cup \{\hat{s}\}\cup \mathbb{V_{S}}%
\cup \mathbb{V_{M}}$, where $\hat{s}$ denotes the \emph{virtual source}.
Also, let $\mathbb{E}_{\hat{s}\mathbb{S}}$ include 
the edges between $\hat{%
s}$ and $\hat{v}$ for each $\hat{v}\in \mathbb{V_{S}}$. Let $\mathbb{E}_{%
\mathbb{S}\mathbb{M}}$ consist of the \emph{virtual edges}
(representing the candidate service chain) between $\hat{v}$ and $\hat{u}$
for each $\hat{v}\in \mathbb{V_{S}}$ and $\hat{u}\in \mathbb{V_{M}}$, and
let $\mathbb{E}_{\mathbb{M}M}$ include the edges between $v$ and $\hat{v}$
for each $v\in M$. Then, $\mathbb{E}=E\cup \mathbb{E}_{\hat{s}\mathbb{S}%
}\cup \mathbb{E}_{\mathbb{S}\mathbb{M}}\cup \mathbb{E}_{\mathbb{M}M}$.
Moreover, the cost of each edge in $\mathbb{E}_{\hat{s}\mathbb{S}}\cup 
\mathbb{E}_{\mathbb{M}M}$ is assigned to 0, and the cost of the \emph{virtual edge} between $\hat{
v}\in \mathbb{V_{S}}$ and $\hat{u}\in \mathbb{V_{M}}$ in $\mathbb{E}_{%
\mathbb{S}\mathbb{M}}$ is equal to the cost of the $k$-stroll walk that visits $|\mathcal{C%
}|$ VMs between $v$ and $u$ in $G$. We first present an illustrative example for the above graph transformation. 

\begin{figure}
	\center
	\subfigure[]{\includegraphics[width=2cm]{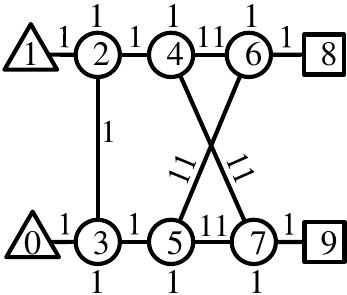}\label{fig: given}}\hfil
	\subfigure[]{\includegraphics[width=3.45cm]{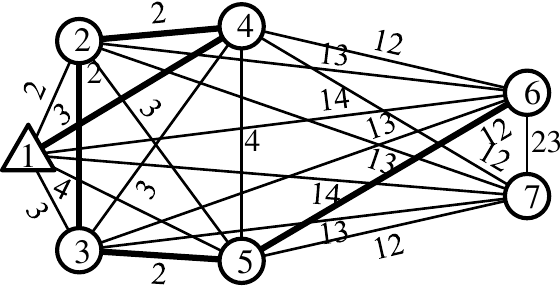}\label{fig: k-Stroll Instance C=5}}\hfil
	\subfigure[]{\includegraphics[width=3.5cm]{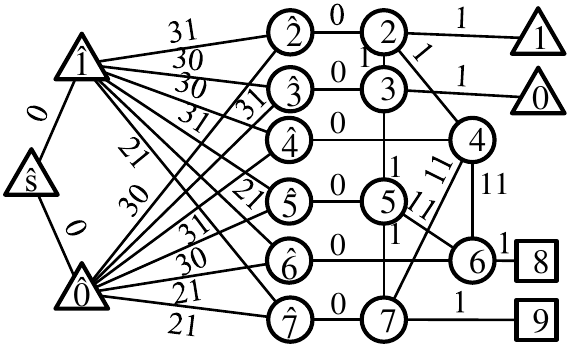}\label{fig: complete}}
	\caption{Example of instance construction of the Steiner tree problem. 
(a) The input network $G$. (b) The instance of the k-Stroll problem $\mathcal{G}$ constructed, 
where the walk $W_\mathcal{G}$ between nodes $1$ and $6$ 
 is shown in bold. (c) The instance of the Steiner tree problem.} \label{fig: example3}
\end{figure}

\begin{Example}
Fig. \ref{fig: example3} presents an example to construct the instance $\mathbb{G}=\{\mathbb{V},\mathbb{E}\}$ of the Steiner tree problem 
 with the graph $G$ shown in Fig. \ref{fig: given}, where $S=\{0,1\}$, $M=\{2,3,4,5,6,7\}$. The output $\mathbb{G}$ is 
presented in Fig. \ref{fig: complete}. SOFDA first replicates $G$ in $\mathbb{G}$. Subsequently, it duplicates the sources $0$ and $1$ by creating nodes $\hat{0}$ and $\hat{1}$, and VMs $2,3,4,5,6,7$ by creating nodes
 $\hat{2},\hat{3},\hat{4},\hat{5},\hat{6},\hat{7}$ in $\mathbb{G}$.
 Then, the costs of edges $(\hat{s},\hat{0}$), $(\hat{s},\hat{1}$), $(\hat{2},2$), $(\hat{3},3$), $(\hat{4},4$), $(\hat{5},5$), $(\hat{6},6$), and $(\hat{7},7$) are all set to $0$. To derive the cost of the virtual edge $(\hat{1},\hat{6})$, SOFDA finds the walk from source 1 to VM $6$ in $G$ as follows. First, it constructs an instance of the $k$-stroll problem $\mathcal{G}$ shown in Fig. \ref{fig: k-Stroll Instance C=5}. 
Then, we obtain a walk $W'_\mathcal{G}=W_\mathcal{G}=(1,4,2,3,5,6)$ in $\mathcal{G}$. By combining the shortest paths with each path connecting two consecutive nodes in $W'_\mathcal{G}$, we find the desired walk $W_G=(1,2,4,2,3,5,6)$ in $G$. Thus, the cost of link $(\hat{1},\hat{6})$ is set to the cost of $W_G$, which is equal to $c(2)+c(4)+c(3)+c(5)+c(6)+c(1,2)+c(2,4)+c(4,2)+c(2,3)+c(3,5)+c(5,6)=21$.
\end{Example}

The following lemma first indicates that the cost of the constructed Steiner
tree in $\mathbb{G}$ is bounded by $\rho _{ST}\cdot 3c(\mathcal{F}^{OPT})$,
by showing that there is a feasible Steiner tree $\mathbb{T}=\{\mathbb{V_{T}}%
,\mathbb{E_{T}}\}$ in $\mathbb{G}$ with the cost bounded by $3c(\mathcal{F}%
^{OPT})$.

\begin{lemma}
A feasible Steiner Tree with the cost no greater than $3c(\mathcal{F}^{OPT})$ exists in $\mathbb{G}$.
\label{theo: exist feasible 1}
\end{lemma}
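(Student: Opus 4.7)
The plan is to exhibit an explicit subgraph $\mathbb{T}$ of $\mathbb{G}$ spanning $\hat{s}$ and $D$ whose cost is at most $3c(\mathcal{F}^{OPT})$; any spanning tree of $\mathbb{T}$ is then a feasible Steiner tree of no greater cost. I would start by decomposing $\mathcal{F}^{OPT}=T_{1}\cup\cdots\cup T_{\ell}$, where each $T_i$ has its own source $s_i$ and serves $D_i\subseteq D$, with the $D_i$'s partitioning $D$ and the $T_i$'s pairwise disjoint per the definition of a service overlay forest. Inside each $T_i$, I would fix any destination $d_i\in D_i$ and take $u_i^{*}$ to be the VM hosting $f_{|\mathcal{C}|}$ on the walk from $s_i$ to $d_i$ in $T_i$---the ``representative last VM'' for $T_i$.

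Next, for each $i$ I would define $\mathbb{T}_i\subseteq\mathbb{G}$ as the union of four pieces: the zero-cost edge $(\hat{s},\hat{s_i})\in\mathbb{E}_{\hat{s}\mathbb{S}}$; the virtual edge $(\hat{s_i},\hat{u_i^{*}})\in\mathbb{E}_{\mathbb{S}\mathbb{M}}$, whose cost equals the $k$-stroll walk cost in $G$ from $s_i$ to $u_i^{*}$ visiting $|\mathcal{C}|$ VMs; the zero-cost edge $(\hat{u_i^{*}},u_i^{*})\in\mathbb{E}_{\mathbb{M}M}$; and a tree $\mathcal{T}_i$ living inside the $G$-copy of $\mathbb{G}$, rooted at $u_i^{*}$ and spanning $D_i$, obtained by a doubling-plus-shortcutting argument applied to the underlying simple subgraph of $T_i$. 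Setting $\mathbb{T}=\bigcup_i\mathbb{T}_i$, the subgraph spans $\hat{s}$ and all of $D$, and any spanning tree of it is the desired feasible Steiner tree.

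The cost bound then reduces to the per-tree inequality $c(\mathbb{T}_i)\leq 3c(T_i)$: summing over $i$ yields $c(\mathbb{T})\leq 3c(\mathcal{F}^{OPT})$. For the virtual-edge contribution, I would use that the walk from $s_i$ to $u_i^{*}$ inside $T_i$ is itself a valid $k$-stroll walk visiting $|\mathcal{C}|$ VMs, so the $k$-stroll cost in $G$ is at most the cost of that walk, which in turn is at most $c(T_i)$ since it uses a sub-collection of the edges and VMs of $T_i$. For $\mathcal{T}_i$, I would double the underlying simple subgraph of $T_i$ re-rooted at $u_i^{*}$, extract an Euler tour from $u_i^{*}$, and shortcut over $\{u_i^{*}\}\cup D_i$ via shortest-path distances in $G$; this gives a tree of cost at most twice the edge cost of $T_i$, hence at most $2c(T_i)$. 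Adding these bounds gives $c(\mathbb{T}_i)\leq c(T_i)+2c(T_i)=3c(T_i)$.

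The hardest part will be bounding $c(\mathcal{T}_i)$ when $T_i$ has several ``last VMs'', so that $u_i^{*}$ does not lie on the walks to every destination in $D_i$: then any $G$-tree rooted at $u_i^{*}$ spanning $D_i$ has to backtrack through the chain part of $T_i$ and fan out again to the other branches. The doubling + shortcutting trick handles this cleanly, but I would need to verify that the shortcut edges---which are shortest paths in $G$---all live in the $G$-copy of $\mathbb{G}$, and that the resulting object is truly a tree rather than a walk, before concluding that the factor $3$ drops out by combining the $c(T_i)$ from the chain and the $2c(T_i)$ from the tail.
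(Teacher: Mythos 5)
Your overall architecture matches the paper's: decompose $\mathcal{F}^{OPT}$ into its trees, fix one representative last VM per tree, and assemble a candidate subgraph of $\mathbb{G}$ from the zero-cost hook-up edges, one virtual edge per tree, and a tree inside the $G$-copy connecting the representative last VM to that tree's destinations. But your per-tree budget split of $1+2$ is wrong, and the error sits in the virtual-edge term. The cost assigned to the virtual edge $(\hat{s_i},\hat{u_i^{*}})$ in $\mathbb{G}$ is \emph{not} the optimal $k$-stroll cost from $s_i$ to $u_i^{*}$: the $k$-stroll problem is NP-hard, so SOFDA sets that edge's cost to the cost of the walk returned by the $2$-approximation algorithm for metric $k$-stroll. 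Your feasibility argument (the walk from $s_i$ to $u_i^{*}$ inside $T_i$ is a valid $k$-stroll walk) therefore only shows that the virtual edge costs at most \emph{twice} $c(T_i)$, not $c(T_i)$. Combined with your $2c(T_i)$ bound on $\mathcal{T}_i$, your construction as written certifies only $4c(\mathcal{F}^{OPT})$.

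The paper's split is $2+1$: it pays the factor $2$ on the virtual edge (exactly the step you dropped) and then bounds the tree part by $c(T_i)$ rather than $2c(T_i)$. Your doubling, Euler tour, and shortcutting are unnecessary here: the underlying simple subgraph of $T_i$ in $G$ is already a connected subgraph of the $G$-copy of $\mathbb{G}$ containing $u_i^{*}$ and every destination in $D_i$ (which is why the ``several last VMs / backtracking'' worry you raise is a non-issue --- connectivity is all that is needed, not a walk structure), and its edge cost, counting each edge once, is at most the connection cost of $T_i$. Hence a Steiner tree rooted at $u_i^{*}$ spanning $D_i$ of cost at most $c(T_i)$ lives inside it, with no need to check that shortcut paths stay in the $G$-copy. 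Substituting this bound into your accounting gives $2c(T_i)+c(T_i)=3c(T_i)$ per tree, and summing over the trees recovers the lemma.
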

\begin{proof}
We first show that there is a $\mathbb{T}$-like graph, $\mathbb{T^{\prime }}=\{%
\mathbb{V_{T^{\prime }}},\mathbb{E_{T^{\prime }}}\}$, with a cost of at most $3c(\mathcal{F}^{OPT})$ in $\mathbb{G}$. Afterward, we extract the desired $\mathbb{T}$ from $\mathbb{T}'$. 
Let $D_v^{OPT}$ denote the set of the destinations in the service overlay tree rooted at source $v$ in $\mathcal{F}^{OPT}$. In addition, for the service overlay tree rooted at source $v$ in $\mathcal{F}^{OPT}$, let $m_v^{OPT}$ be the representative last VM chosen from all the VMs
running $f_{|\mathcal{C}|}$ on the paths from $v$ to the destinations in $D_v^{OPT}$. Moreover, let $T_v$ be the optimal Steiner tree rooted at $m_v^{OPT}$ that spans all destinations in $D_v^{OPT}$ in $G$. Then, let $\mathbb{V_{T^{\prime }}}$ consist of 1) $\hat{s}$, 2) the duplicate $\hat{v}$ (in $\mathbb{V_S}$) of each source $v$ in $\mathcal{F}^{OPT}$, 3) the duplicate 
$\hat{m_v^{OPT}}$ (in $\mathbb{V_M}$) of each $m_v^{OPT}$ in $\mathcal{F}^{OPT}$, 4) each $m_v^{OPT}$ in $\mathcal{F}^{OPT}$, and 5) all VMs and switches (including all destinations in $D$) in all optimal Steiner trees $T_v$ in $G$. Let $\mathbb{E_{T^{\prime }}}$ include the edges between 1) $\hat{s}$ and $\hat{v}$, 2) $\hat{v}$ and $\hat{m_v^{OPT}}$, 3) $m_v^{OPT}$ and $\hat{m_v^{OPT}}$ for each spanned source $v$ in $\mathcal{F}^{OPT}$, and 4) all links in all optimal Steiner trees $T_v$ in $G$ for each used source $v$ in $\mathcal{F}^{OPT}$.

Note that for each source $v$ in $\mathcal{F}%
^{OPT}$, the cost of the edge between $\hat{v}$ and $\hat{m_v^{OPT}}$ in $%
\mathbb{T^{\prime }}$ is bounded by twice of the cost of the shortest walk
that visits $|\mathcal{C}|$ VMs between $v$ and $m_v^{OPT}$ in $G$. Since
there is a walk between $v$ and $v^{OPT}$ in $\mathcal{F}^{OPT}$, the total
cost of the edges in $\mathbb{E_{T^{\prime }}} \cap \mathbb{E}_{\mathbb{S}%
\mathbb{M}}$ is bounded by $2c(\mathcal{F}^{OPT})$. In addition, the cost
of $T_v$ is restricted by the connection cost of the service overlay tree with
root $v$ in $\mathcal{F}^{OPT}$, because the latter one not only spans $m_v^{OPT}$ 
and the destinations but also spans the source $v$ and other VMs (running
 $f_1,f_2,... ,f_{|\mathcal{C}|}$). Thus, the total cost of every edge in $%
\mathbb{E_{T^{\prime }}} \cap E$ is bounded by $c(\mathcal{F}^{OPT})$. Since
the cost of each edge in $\mathbb{E_{T^{\prime }}} \cap \mathbb{E}_{\hat{s}%
\mathbb{S}}$ or $\mathbb{E_{T^{\prime }}} \cap \mathbb{E}_{\mathbb{M}M}$ is
0, the cost of $\mathbb{T^{\prime }}$ is bounded by $3c(\mathcal{F}^{OPT})$%
. Furthermore, there is a subgraph (more specifically, a tree) $\mathbb{T}$ of 
$\mathbb{T^{\prime }}$ that spans the virtual node
and all the destinations in $\mathbb{G}$. 
Hence, the cost of $\mathbb{T}$ is smaller than that of $%
\mathbb{T^{\prime }}$ and is bounded by $3c(\mathcal{F}^{OPT})$.
\end{proof}

\subsection{Cost-Bounded Service Overlay Forest}\label{subsec: service overlay tree}
After finding a Steiner tree $\mathbb{T}$ in $\mathbb{G}$ with a bounded cost of at most $3\rho_{ST} c(\mathcal{F}^{OPT})$ by the above $\rho_{ST}$-approximation algorithm, to limit the total cost of the service overlay forest, SOFDA will deploy each service chain with the corresponding virtual edge in $\mathbb{T}\cap\mathbb{E}_{\mathbb{S}\mathbb{M}}$ and the route traffic via the edges in $\mathbb{T}\cap E$. Specifically, 
SOFDA first 1) adds each corresponding walk of the spanned virtual edge one by one in $G$ and then 2) adds all VMs, switches, and links in $\mathbb{T}\cap G$ to $F$.

\begin{Example} \label{ex: steiner tree 1}
Fig. \ref{fig: example4} presents an example for the construction of the service overlay forest with $\mathcal{C}=(f_1,f_2,f_3,f_4,f_5)$ in Fig. \ref{fig: given} by SOFDA.
First, an instance $\mathbb{G}=\{\mathbb{V},\mathbb{E}\}$ of the Steiner Tree problem is constructed with the input parameters $G$, $S=\{0,1\}$, $M=\{2,3,4,5,6,7\}$, and $\mathcal{C}=(f_1,f_2,f_3,f_4,f_5)$, and a Steiner tree $\mathbb{T}$ in $\mathbb{G}$ using the $\rho_{ST}$-approximation algorithm 
in \cite{SteinerTreeBestRatio} is obtained, as shown in Fig. \ref{fig: Steiner tree 1}.
\end{Example}

Nevertheless, multiple walks in $G$ corresponding to the spanned virtual edges in $\mathbb{T}$ may overlap in a few VMs, and the solution in this case is infeasible if any overlapping VM in this case needs to perform different VNFs (see the definition of SOF in Section \ref{sec: SOF problem}). The situation is called \emph{VNF conflict} in this paper. In the following, we present an effective way to eliminate the conflict by tailoring the overlapping walks without increasing the cost. To address the VNF conflict, when a walk $W_{G}=(v_{1},v_{2},\cdots ,v_{n})$ in $G$ is added to the service overlay forest $F$, it is encouraged to augment $F$ with a modified walk $W=(u_{1},u_{2},\cdots ,u_{n})$ based on $W_{G}$. Note that a VM or switch is allowed to be passed without processing any VNF by simply forwarding the data. 
Moreover,
a VNF conflict happens when two walks compete for a clone to perform different VNFs. Fig. \ref{fig: two walks with conflict} presents an example of the VNF conflict, where $W_1$ and $W_2$ respectively desire to run $f_1$ and $f_4$ on VM 4. Suppose that a walk $W$ (between source $s$ and VM $v$) 
faces
the VNF conflict with another walk $W_{1}$ (between source $s_{1}$ and VM $v_{1}$) in $F$. We 
solve
the
conflict between $W$ and $W_{1}$ by changing the source of $W$ from $s$ to $s_{1}$ (attaching $W$ to $W_{1}$), or changing the source of $W_{1}$ from $s_{1}$ to $s$ (attaching $W_{1}$ to $W$) without adding new links, VMs, and switches to $F$ and without enabling new VMs in $F$ for VNFs.

\begin{Example}  \label{ex: steiner tree 2}
Following Example \ref{ex: steiner tree 1}, SOFDA finds walks $W_{G,1}=(1,2,4,2,3,5,6)$ and $W_{G,2}=(0,3,5,3,2,4,7)$ in $G$, where $f_1$, $f_2$, $f_3$, $f_4$, $f_5$ are installed at VMs $4$, $2$, $3$, $5$, $6$ on $W_{G,1}$, and also at VMs $3$, $5$, $2$, $4$, $7$ on $W_{G,2}$, respectively.
After $W_{G,1}$ is added to $F$, we have $F=\{W_1\}$, where $W_1$ consists of one clone for source $1$, two clones of VM $2$, and one clone for VMs $4$, $3$, $5$, $6$ due to $F=\emptyset$ in the beginning. As $W_{G,2}$ is added to $F$, SOFDA augments $F$ with $W_2$, where $W_2$ includes one clone for source $0$, one clone for VMs $3$, $5$, $2$, $4$ (on which $f_3$, $f_4$, $f_2$, and $f_1$ are already running on $W_1$), and two new clones for VMs $3$ and $7$, as shown in Fig. \ref{fig: two walks with conflict}. 
\end{Example}

\begin{figure}
	\center
	\subfigure[]{\includegraphics[width=1.5cm]{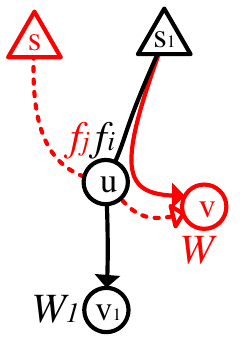}\label{fig: conflict 1}}\hfil
	\subfigure[]{\includegraphics[width=1.5cm]{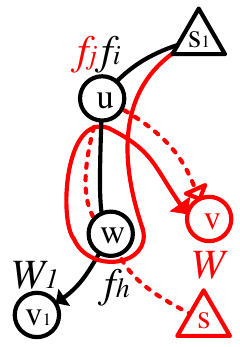}\label{fig: conflict 2}}\hfil
	\subfigure[]{\includegraphics[width=1.5cm]{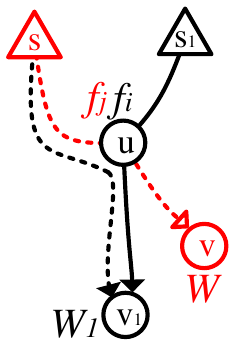}\label{fig: conflict 3}}
	\caption{Resolve of \emph{VNF conflicts} between two walks, where the black solid (or dashed) line denotes the original (or updated) $W_1$, and the red dashed (or solid) line denotes the original (or updated) $W_2$. Fig. \ref{fig: example5} (a), (b), and (c) show the resolve of the first, second, and third kinds of \emph{VNF conflicts}, respectively.} \label{fig: example5}
\end{figure}
Specifically, let $u$ be the first VM, where $W$ experiences the VNF conflict with $W_{1}$ by backtracking $W$. Recall in Fig. \ref{fig: two walks with conflict}, for example, that VM 4 is the first conflict node with $W_1$ by backtracking $W$. 
Let $f_{1},f_{2},\cdots ,f_{|\mathcal{C}|}$ denote the VNFs required to be performed in sequence on $W$ and $W_{1}$. Let $f_{i}$ and $f_{j}$ be the VNFs located at $u$ on $W_{1}$ and $W$, respectively. SOFDA effectively addresses the VNF conflict in details as follows.  

First, if $j\leq i$, SOFDA attaches $W$ to $W_{1}$ through $u$ by changing $W$ to the concatenation of the sub-walk of $W_{1}$ from $%
s_{1}$ to $u$ (on which $f_{1},f_{2},\cdots ,f_{i}$ are installed in sequence, identical to $W_{1}$) and the sub-walk of $W$ from $u$ to $v$ (on which $%
f_{i+1},f_{i+2},\cdots ,f_{|\mathcal{C}|}$ are running in sequence, identical to $W$), as shown in Fig. \ref{fig: conflict 1}.

\begin{Example} \label{ex: steiner tree 3}
Following Example \ref{ex: steiner tree 2}, $W_2$ first experiences the \emph{VNF conflict} with $W_1$ at (the clone of) VM $4$, where $f_4$ and $f_1$ are installed on $W_2$ and $W_1$, respectively. The sequence numbers of the VNFs at VM $4$ on $W_2$ and $W_1$ are $4$ and $1$, respectively. 
The condition $j\leq i$ is not satisfied since $j=4$ and $i=1$. SOFDA then checks the next condition. Note that one of the three conditions must be satisfied. 
\end{Example}

Second, if there is another VM $w$ such that $W$ experiences the \emph{VNF conflict} with 
$W_{1}$ at $w$, where $f_{h}$ with $h\geq j$ is on $W_{1}$, SOFDA attaches $W$ to $W_{1}$ through $w$ by changing $W$ to the concatenation of the sub-walk of $W_{1}$ from $s_{1}$ to $w$ (on which $f_{1},f_{2},\cdots ,f_{h}$ are running in sequence, identical to $W_{1}$), the sub-walk of $W$ from $w$ to $u$, and the sub-walk of $W$ from $u$ to $v$ (on which $f_{h+1},f_{h+2},\cdots,f_{|\mathcal{C}|}$ are running in sequence, identical to $W$), as illustrated in Fig. \ref{fig: conflict 2}.

\begin{Example}
Following Example \ref{ex: steiner tree 3}, $W_2$ experiences another \emph{VNF conflict} with $W_1$ at VM $5$, where $f_2$ and $f_4$ are performed on $W_2$ and $W_1$, respectively. Since the sequence number of the VNF at VM $5$ on $W_1$ is not smaller than that of the VNF at VM $4$ on $W_2$, SOFDA attaches $W_2$ to $W_1$ through VM $4$ as follows. SOFDA first steers $W_2$ along the sub-walk of $W_1$ from source $1$ to VM $5$ (i.e., the walk $(1,2,4,2,3,5)$) on which $f_1,f_2,f_3,f_4$ are running in sequence at VMs $4$, $2$, $3$, $5$, respectively, identical to $W_1$. 
Subsequently, it continues steering $W_2$ along the sub-walk of $W_2$ from VM $5$ to VM $4$ (i.e., the walk $(5,3,2,4)$), and the sub-walk of $W_2$ from VM $4$ to VM $7$ (i.e., the walk $(4,7)$) on which $f_5$ is run at VM $7$, identical to $W_2$. Finally, the sub-walk $(5,3,2,4,7)$ on the revised $W_2$ can be shortened to be a walk $(5,7)$. The constructed service overlay forest for $G$ is displayed in Fig. \ref{fig: overlap}.
\end{Example}

Otherwise, SOFDA attaches $W_{1}$ to $W$ through $u$ by changing $W_{1}$ to the concatenation of the sub-walk of $W$ from $s$ to $u$ (on which $f_{1},f_{2},\cdots ,f_{j}$ are running in sequence, identical to $W$) and the sub-walk of $W_{1}$ from $u$ to $v_{1}$ (on which $%
f_{j+1},f_{j+2},\cdots ,f_{|\mathcal{C}|}$ are run in sequence, identical to $W_{1}$), as shown in Fig. \ref{fig: conflict 3}. Moreover, when a walk $W$ experiences the \emph{VNF conflict} with multiple walks $W_{1},W_{2},\cdots ,W_{l}$ in $F$ in sequence by backtracking $W$, SOFDA
resolves the \emph{VNF conflict} between $W$ and $W_{1},W_{2},\cdots ,W_{l}$ one-by-one. The following theorem derives the approximation ratio for SOFDA.

\begin{figure}
	\centering
	\subfigure[]{\includegraphics[width=3.6cm]{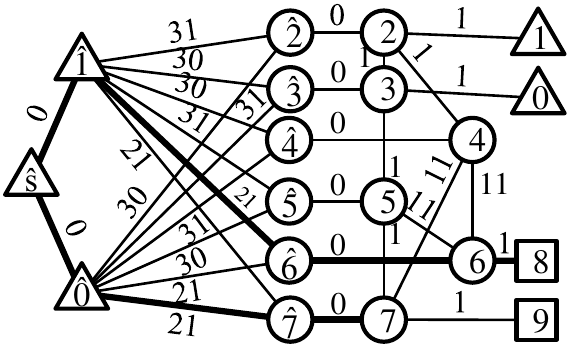}\label{fig: Steiner tree 1}}
	\hfill
	\begin{minipage}[b]{4 cm}
		\subfigure[]{\includegraphics[width=3.6cm]{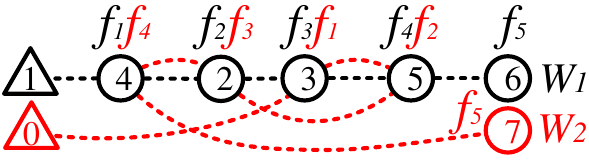}\label{fig: two walks with conflict}}
		\subfigure[]{\includegraphics[width=3.6cm]{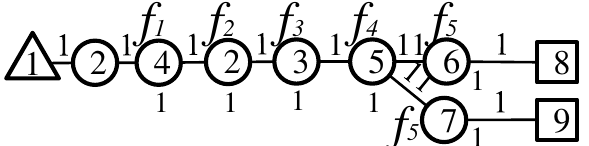}\label{fig: overlap}}
	\end{minipage}
	\caption{Example of construction of the service overlay forest by SOFDA. 
(a) The Steiner tree, shown in bold line. (b) Two walks with \emph{VNF conflict}. (c) The service overlay forest with $\mathcal{C}=(f_1,f_2,f_3,f_4,f_5)$ constructed for $G$ in Fig. \ref{fig: given}. }\label{fig: example4}
\end{figure}

\begin{theo}
	The cost of the constructed service overlay forest $F$ is bounded by $3\rho_{ST}c(\mathcal{F}^{OPT})$. \label{theo: exist feasible 2}
\end{theo}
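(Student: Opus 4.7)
My plan is to bound $c(F)$ through a two-step comparison: first relate $c(F)$ to the cost of the Steiner tree $\mathbb{T}$ produced in the auxiliary graph $\mathbb{G}$, and then invoke Lemma \ref{theo: exist feasible 1} to bound $c(\mathbb{T})$ in terms of $c(\mathcal{F}^{OPT})$. The overall target inequality $c(F)\le 3\rho_{ST}\,c(\mathcal{F}^{OPT})$ then follows by chaining the two bounds.

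For the outer bound, I would argue as follows. By Lemma \ref{theo: exist feasible 1}, there exists a feasible Steiner tree in $\mathbb{G}$ (spanning $\hat{s}$ and all destinations) whose cost is at most $3c(\mathcal{F}^{OPT})$. Since SOFDA invokes a $\rho_{ST}$-approximation algorithm for the Steiner tree problem on $\mathbb{G}$, the tree $\mathbb{T}$ it returns satisfies $c(\mathbb{T})\le \rho_{ST}\cdot 3c(\mathcal{F}^{OPT})$. Note that only edges in $\mathbb{T}\cap \mathbb{E}_{\mathbb{S}\mathbb{M}}$ and $\mathbb{T}\cap E$ contribute positive cost, since edges in $\mathbb{E}_{\hat{s}\mathbb{S}}\cup\mathbb{E}_{\mathbb{M}M}$ have cost $0$. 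By construction of $\mathbb{E}_{\mathbb{S}\mathbb{M}}$, each virtual edge's cost equals the cost of its corresponding $k$-stroll walk in $G$, so the cost of $\mathbb{T}$ accounts exactly for the walk costs plus the tree edges in $G$ that SOFDA installs.

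The inner step requires showing that the service overlay forest $F$, after the conflict-resolution procedure in Section \ref{subsec: service overlay tree}, satisfies $c(F)\le c(\mathbb{T})$. I would proceed by induction over the walks $W_{G,1},W_{G,2},\ldots$ added into $F$. Before any conflicts arise, $F$ is simply the union of these walks with the Steiner-like edges from $\mathbb{T}\cap E$, whose total cost is exactly $c(\mathbb{T})$. When the $i$-th walk $W$ experiences a \emph{VNF conflict} with an already-present walk $W_1$, the resolution rule in each of the three cases (Fig.~\ref{fig: example5}) replaces a portion of $W$ (or of $W_1$) by a sub-walk lying entirely within pieces of $F$ already paid for, and it explicitly does not add new links, VMs, or switches to $F$ or enable new VMs. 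Hence each resolution can only leave $c(F)$ unchanged or decrease it, preserving the invariant $c(F)\le c(\mathbb{T})$. Applying this to all conflicts (resolved one-by-one as described for the case $W$ collides with $W_1,W_2,\ldots,W_l$) yields $c(F)\le c(\mathbb{T})\le 3\rho_{ST}c(\mathcal{F}^{OPT})$.

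I expect the delicate step to be the bookkeeping in the conflict-resolution argument, specifically verifying that after attaching $W$ to $W_1$ through the first conflicting VM $u$ (case $j\le i$) or through a later VM $w$ (case when a VM with $h\ge j$ exists), the concatenated walk still delivers $f_1,f_2,\ldots,f_{|\mathcal{C}|}$ in the correct sequence to the destination originally served by $W$, while every link and enabled VM it uses was already in $F$. A secondary subtlety is ensuring that when SOFDA shortens a residual sub-walk (as in the example where $(5,3,2,4,7)$ is replaced by $(5,7)$), the shortcut uses only edges already present in $\mathbb{T}\cap G$ (or charges nothing new), so the inductive invariant survives; once this is argued uniformly for the three conflict patterns, the chain $c(F)\le c(\mathbb{T})\le \rho_{ST}\cdot 3c(\mathcal{F}^{OPT})$ gives the claimed $3\rho_{ST}$-approximation.
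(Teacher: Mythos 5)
Your proposal is correct and follows essentially the same route as the paper's proof: bound $c(\mathbb{T})$ by $\rho_{ST}$ times the feasible tree from Lemma \ref{theo: exist feasible 1}, observe that the virtual-edge costs equal the corresponding walk costs so that $c(F)=c(\mathbb{T})$ absent conflicts, and note that conflict resolution adds no new links, switches, or enabled VMs and hence cannot increase the cost. Your additional bookkeeping (the explicit induction over added walks and the remark about shortcutting residual sub-walks) only elaborates steps the paper asserts more tersely.
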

\begin{proof}
First, the cost of Steiner tree $\mathbb{T}$ in $\mathbb{G}$ is bounded by $\rho_{ST}$ times of the optimal Steiner tree in $\mathbb{G}$. Since the cost of the optimal Steiner tree in $\mathbb{G}$ is bounded by $3c(\mathcal{F}^{OPT})$ according to Lemma \ref{theo: exist feasible 1}, the cost of $\mathbb{T}$ is limited by $3\rho_{ST}c(\mathcal{F}^{OPT})$. In addition, since the cost of the edge between $\hat{v} \in \mathbb{V_S}$ and $\hat{u} \in \mathbb{V_M}$ of $u$ in $%
\mathbb{T}$ is identical to the cost of the walk that visits $|\mathcal{C}|$ VMs between $v$ and $u$ in $G$, the cost of $F$ constructed in $G$ is equal to the cost of $\mathbb{T}$ and thereby bounded by $3\rho_{ST}c(\mathcal{F}^{OPT})$ if no \emph{VNF conflict} occurs in $F$. On the other hand, when the \emph{VNF conflict} between two walks happens, one of the two walks in $F$ is updated, and no new link, VM, and switch is added to $F$, and no VM in $F$ is newly created to perform the VNF. Thus, the cost of $F$ revised for resolving the \emph{VNF conflict} is still bounded by $3\rho_{ST}c(\mathcal{F}^{OPT})$. The theorem follows.
\end{proof}

\textbf{Time Complexity Analysis.}
We follow the notations in the time complexity analysis of SOFDA-SS. To generate the instance of the Steiner tree problem, SOFDA constructs $|S||M|$ instances of the $k$-stroll problem, and each of them employs the Dijkstra algorithm $|M|$ times to compute the edge costs of each instance. Then, SOFDA solves the $k$-stroll instance by \cite{k-MST03} to derive the costs of virtual edges (i.e., corresponding candidate service chains). To eliminate the conflict, in the worst case, all the added walks in $F$ are appended to the newly added walk, and the complexity is $O(|M|^3)$.
Therefore, the total time complexity is dominated by constructing and solving $k$-stroll instance and finding a Steiner tree, i.e., $O(|S||M|(|M|T_d+T_k)+T_s)$.

\section{Distributed Implementation} \label{sec: decentralize}

For large SDNs, it is important to employ multiple SDN controllers, where each one monitors and controls a subset of the network \cite{SDX,LocalAlgoInSDNs,applySDNtoTelecomHorizontalOrVertical},
and the communication protocols \cite{MultiControllerModel} between controllers are developed to facilitate scalable control of the whole network. In the following, therefore, we discuss the distributed implementation of the proposed algorithm in Section \ref{sec: general case} to support multi-controller SDNs. Note the controller that receives the request is elected to be the \emph{leader}, which is responsible for progress tracking and phase switching. 

First, shortest-path routing plays a fundamental role in SOFDA to build the auxiliary graph $\mathbb{G}$ and the service chain corresponding to each edge in $\mathbb{G}$. To find a shortest path traversing multiple domains, it is necessary for each controller to first abstract a matrix that consists of the lengths between every pair of border routers over the Southbound interface \cite{MultiControllerModel} within its domain. Afterward, each controller propagates the matrix to the other controllers along with the Network Layer Reachability Information of SDNi Wrapper over East-West Interface. 
which is used to share the connectivity information with the neighboring controllers. More specifically, let $s$ and $t$ denote the source and the destination, respectively. The controller $C_s$ covering $s$ can find the corresponding domain by the IP prefix of $t$. Then, controller $C_s$ informs the controller $C_t$ that covers $t$ of the lengths of all shortest paths from $s$ to all broader routers of $C_t$. Afterward, controller $C_t$ can respond the best broader router to controller $C_s$, and the length of a shortest path can be acquired accordingly.

Equipped with the shortest-path computation from multiple controllers, each controller can acquire the length of each shortest path between a VM in its domain and any other VM (or source). Thus, once the forest construction is initiated,
every controller that covers a source will communicate with other controllers to collect the matrices of lengths between any two VMs and the lengths between any source and any VM. Then, the controller can find all candidate service chains from its covered source to each VM and creates a virtual link in $\mathbb{G}$ representing the service chain to connect the virtual source and the corresponding last VM.

Afterward, a distributed Steiner tree algorithm \cite{PDSteinerTreeAlgo} can be employed by multiple controllers to find the Steiner tree, where the computation load originally assigned to each switch in the distributed algorithm can be finished by its controller instead. In SOFDA, it is important to address the VNF conflicts in multiple domains. To achieve this goal, each controller first removes the \emph{useless} candidate service chains that do not connect with any destination, and then informs any other controller whose coverage is visited by any remaining service chain. When one of the informed controllers observes a VNF conflict of two service chains, it notifies the other controller to collaboratively remove the conflict according to the conflict elimination algorithm described in Section \ref{subsec: service overlay tree}. Finally, each controller deletes the virtual source, deploys the remaining service chains, and forwards the content to the destinations by SOF.

\section{Discussion}\label{sec: discussion}
\subsection{Static Mulitcast Trees with Service Chaining} \label{subsec: static case}
To the best knowledge of the authors, this paper is the first one that explores the notion of the \textit{service forest}, i.e., the fundamental multi-tree multicast problem with service chaining, and provides approximation algorithms with theoretical bounds. Therefore, we first consider the fundamental problems for static SDN/NFV multicast and then extend the proposed algorithms to the dynamic case in Section \ref{subsec: dynamic adjustment}. 

Actually, static multicast is crucial for backbone ISPs. In this situation, each terminal node of a multicast tree is usually an edge router or a local proxy server of the ISP, instead of a dynamic user client. For example, current live streams are sent by the source (i.e., headends or content servers) and travel through the high-speed backbone network to the access nodes and edge nodes (e.g., Digital Subscriber Line Access Multiplexer (DSLAM) \cite{IPTV}, or a Mobile Edge Computing (MEC) server \cite{MEC-scenarios}) via \textit{static multicast} trees \cite{BenefitOfMulticast,IPTV,prejoinIPTV,surveyIPTV,measureCHT-MOD,MEC-scenarios} (e.g., Chung-Hwa Telecom MOD \cite{CHT-MOD}), whereas the dynamic user join and leave are handled by the local access nodes and edge nodes. Static multicast trees can significantly reduce the backbone bandwidth consumption for each stream and thereby is much more scalable to support a large number of video channels. In this case, each access node usually serves hundreds or thousands of end users and streams one (or few) channel(s) to each user according to the available bandwidth between the access node and user devices (e.g., set-top boxes). Moreover, for the massively multi-user virtual reality (e.g., gaming) \cite{MuVR,InterconnectedVirtualReality,MultiuserVR1,MultiuserVR2,MultiuserVR3}, the servers create a virtual environment with a 3D model, player avatars, and scripts, and then transmit the data by static multicast to several MEC servers \cite{MEC-scenarios}, which always need to appear in a multicast group. In the above life examples, our proposed algorithms can facilitate static multicast with service chaining (i.e., multiple stages of servers) to support a large number of streams between the headend server and the local access nodes/edge nodes.

\subsection{Cost Model and Online Deployment} \label{subsec: online deployment}
In the online scenario, when a new request arrives, SOFDA allocates the required resources for the request by constructing a service forest according to the current link and node costs. To balance the network resource consumption and accommodate more requests in the future, congested links and nodes are unnecessary to be assigned with higher costs for encouraging SOFDA to employ the links and nodes with low loads \cite{LinkWeightSetting,LinkWeightSetting2,NodeCost}. In this paper, therefore, we exploit \cite{LinkWeightSetting}, which is designed for online adaptive routing in the Internet, to assign a convex cost to each link or node. The cost will significantly increase as the load linearly grows, to avoid overwhelming the link or node. More specifically, let $l$ and $p$ denote current load and capacity of the link or node, respectively, and the cost $c$ is set according to the utilization (i.e., $l/p$) as follows and illustrated in Fig. \ref{fig: cost model}.
\begin{equation}\notag
c=
\begin{cases}
\begin{array}{ll}
l & \text{if } l/p \leq 1/3,\\
3l-2/3p & \text{else if } l/p \leq 2/3,\\
10l-16/3p & \text{else if } l/p \leq 9/10,\\
70l-178/3p & \text{else if } l/p \leq 1,\\
500l - 1468/3p & \text{else if } l/p \leq 11/10,\\
5000l-14318/3p  & \text{otherwise}.
\end{array}
\end{cases}
\end{equation}
\begin{figure}[t]
\center
\includegraphics[width=4.4cm]{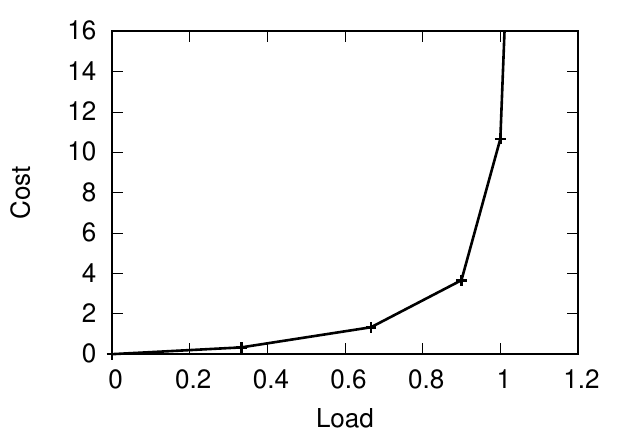}
\caption{The cost function with different load $l$ and capacity $p=1$.}
\label{fig: cost model}
\end{figure}

The cost model properly handles the online situation by assigning a huge cost to a more congested node or link. Therefore, SOFDA will avoid choosing the above congested node or link to minimize the total cost of the service forest. SOFDA thereby can mitigate the impact on a VM of other VMs colocated with an overloaded node. Indeed, the cost model can be applied to both private and public cloud networks, where resource optimization and load balancing are usually addressed. For example, Chung-Hwa Telecom MOD \cite{CHT-MOD} is built in its private clouds while Netflix \cite{Netflix} adopts AWS \cite{AWS}.

Nevertheless, each request has a different duration, and an approach without considering the duration of the request is inclined to incur fragmentation of the network resources and degrade the performance. However, the duration of a stream (e.g., a VR multi-player game) is usually difficult to be precisely predicted, and many current approaches thereby adaptively reroute \cite{MulticastReroute1,MulticastReroute2,Reroute3,Reroute4,Reroute5} and migrate the VM \cite{NodeCost,VMmigration,VMmigration2,VMmigration3,VMmigration4} to relocate the network resources when congestion occurs. Similar to the above approaches, when a node or link becomes congested, SOFDA reroutes the service forest by letting the users downstream to the above node or link re-join the forest again (explained in the reply of the first question), 
where the current path in the forest is removed only after the new join path is created to avoid service interruption \cite{MulticastReroute1,MulticastReroute2,Reroute3,Reroute4,Reroute5}.

\subsection{Adjustments for Various Dynamic Cases} \label{subsec: dynamic adjustment}.
In the following, we extend SOFDA to support the dynamic join and leave of destination users and the addition and deletion of NFVs in a service forest after a session starts. To address the dynamic case, a simple approach is to run SOFDA again for the whole forest. Nevertheless, this approach tends to incur massive computation loads in the SDN controller, especially when users frequently join and leave the multicast group or change the computation tasks in the service forest. In the following, therefore, we extend SOFDA to properly handle the dynamic case \cite{reliableMulticastforSDN,multicastTEforSDN}.
\begin{enumerate}
\item \textbf{Destination Leave.} When a destination $v$ leaves the service forest, if $v$ is a leaf node, SOFDA removes $v$ and all intermediate nodes and links in the path connecting $v$ and the closest upstream branch node in the service forest, where a branch node is a node in the forest with at least two child nodes. By contrast, if $v$ is not a leaf node, because there are other destination users in the subtree rooted at $v$,  SOFDA is not allowed to remove the path connecting to the upstream branch node.
\item \textbf{Destination Join.} When a new destination user $v$ joins the service forest, SOFDA finds the walk from $v$ to the forest with the lowest cost. More specifically, for each node $u$ in the forest $\mathcal{F}$ that can be a candidate branch node to connect $v$, let $f(u)$ denote the index of the last installed VNF between a source $s$ and $u$ in the forest. To derive the cost in the walk from $u$ to $v$, SOFDA finds the walk with $k=|\mathcal{C}|-f(u)+1$ from $u$ to $v$ to install the $(|\mathcal{C}|-f(u))$ new VNFs in the walk by exploiting $k$-stroll in the transformed graph (see Section \ref{sec: one source}). Let $W_G(u,v)=(u_1, ..., u_k)$ denote the acquired walk, where $u_1=u$ and $u_k=v$. In this case, SOFDA needs to install the new VNFs $f_{f(u)+1},...,f_{|\mathcal{C}|}$ on the above walk from $u$ to $v$, and the cost of the forest is increased by  $\min_{u\in\mathcal{F}}\{ c(W_G(u,v))\}$. SOFDA carefully examines every possible $u$ in the existing service forest to effectively minimize the increasing cost, and the node $u$ leading to the smallest cost is selected to serve the new destination user $v$ accordingly. 

\item \textbf{VNF Deletion.} When VNF $f_j$ is removed from the service forest, for each VM $v$ that installs a VNF $f_j$, SOFDA connects the VM $u$ with the upstream VNF $f_{j-1}$ to the VM $w$ with the downstream VNF $f_{j+1}$ (along the minimum-cost path from $u$ to $w$ in the original $G$) in the forest, where the source (or destination) can be regarded as the VM with the upstream (or downstream) VNF $f_{j-1}$ (or $f_{j+1}$) if $f_j$ is the first (or last) VNF.

\item \textbf{VNF Insertion.} When VNF $f_j$ is inserted to the service forest, for each pair of VMs $u$ and $w$ with VNFs $f_{j-1}$ and $f_{j+1}$, respectively, SOFDA installs $f_j$ on an available VM $v$, and connects $u$ to $v$ and $v$ to $w$ in the forest such that the sum of 1) the connection cost of the path between $u$ and $v$, 2) the setup cost of $v$, and 3) the connection cost of the path between $v$ and $w$ is minimized. When $f_j$ is the first (or last) VNF, the source (or destination) is regarded as the VM with VNF $f_{j-1}$ (or $f_{j+1}$). In addition, if two pairs of VMs $(u_1, w_1)$ and $(u_2, w_2)$ with VNFs $f_{j-1}$ and $f_{j+1}$ choose the same VM $v$ to install VNF $f_j$, SOFDA removes all intermediate nodes and links in the path connecting $u_2$ and $v$ in the forest in order to reduce the total cost of the forest (i.e., avoid creating redundant paths in the forest).

\item \textbf{Link Congestion.} For any congested link $e$ between the VMs with VNF $f_j$ and $f_{j+1}$, SOFDA updates the link cost according to \cite{LinkWeightSetting} and then re-connects the two VMs with the path associated with the lowest cost. SOFDA can effectively avoid choosing a congested link because the cost of the link will be extremely large. On the other hand, if $e$ is between the source and a VM (or VM and a destination), the source (or the destinations) is regarded as the upstream VM (or the downstream VM) and handled in a similar manner.
\item \textbf{VM Overload.} For any overloaded VM $v$ between the VMs with VNF $f_{j-1}$ and $f_{j+1}$, SOFDA updates the node cost according to \cite{LinkWeightSetting} to find an available VM $v'$ and then re-connects it to the upstream VM and downstream VM with the path having the lowest cost. Therefore, SOFDA can also avoid selecting an overloaded VM. On the other hand, if $v$ is the first VNF (or the last VNF), the source (or the destinations) is regarded as the upstream VM (or the downstream VM) and then  handled in a similar manner.
\end{enumerate}

\section{Numeric Result} \label{sec: simulation}
\subsection{Simulation Setup}

We conduct simulations to compare different approaches in two
inter-data-center networks: IBM SoftLayer \cite{SoftLayer} and Cogent \cite%
{Cogent}. SoftLayer contains 27 access nodes with 49 links and 17 data
centers, whereas Cogent has 190 access nodes with 260 links and 40 data
centers. We also generate a synthetic network with 5000 access nodes, 10000 links
, and 2000 data centers by Inet \cite{Inet}. The edge costs and the node costs are set
according to \cite{LinkWeightSetting} and \cite{NodeCost} based on the corresponding loads (described in Section \ref{subsec: online deployment}), 
respectively.
The sources and destinations are chosen uniformly at random from the nodes
in the network. We examine the performance of different approaches in two
scenarios: one-time deployment and online deployment. Moreover, we also
implement all algorithms in a small-scale SDN with HP SDN
switches.

In the one-time deployment scenario, the link bandwidth is set to 100 Mbps, 
and each requested demand is set to 5 Mbps. The link usage is randomly chosen in $(0,1)$ so as to derive the edge cost according to \cite{LinkWeightSetting}. 
Also, the total number of VMs ranges in $\{5,15,25,35,45\}$, and each VM is
randomly attached to a data center.
The service chain length (i.e., the number of VNFs in the chain) ranges in $\{3,4,5,6,7\}$ in both Cogent and Softlayer networks. The number of destinations and candidate sources range in $\{2,4,6,8,10\}$ and $\{2,8,14,20,26\}$, respectively in both networks.
The default numbers of candidate sources, destinations, VMs, and service
chain length are 14, 6, 25, 3, respectively. 

Afterward, for the online deployment scenario, the node/link usages are zero
initially. Each data center has 5 VMs with the cost according to the host machine utilization. Afterward, we incrementally generate a new request, and the node costs and edge costs will be updated according to \cite{LinkWeightSetting}. The numbers of
destinations and candidate sources in the request are randomly chosen from
13 to 17 and 8 to 12 in Softlayer, and they are from 20 to 60 and from 10 to 30 in
Cogent, respectively. The number of demanded services in a request is 3.

We compare the proposed algorithm with the following ones. 1) CPLEX \cite%
{CPLEX}. It finds the optimal solution by solving the IP formulation
in Section \ref{subsec: IP}.
2) Enhanced Steiner Tree (eST). Since the Steiner tree
algorithm \cite{SteinerTreeBestRatio} does not select VMs in the tree, we
extend it for SOF as follows. We find the minimum-cost tree among all
Steiner trees rooted at different sources. Afterward, we construct the
shortest service chain that is closest to the tree from \cite%
{ServiceChaining,ServiceChainWithDeadline} and then connect it to the tree
with the minimum cost. 3) Enhanced algorithm for the NFV enabled multicast
problem (eNEMP). Since the algorithm for the NFV enabled multicast problem
(NEMP) \cite{NFV-multicast} does not support multiple sources and VNFs,
similar to the above extension, we construct a service chain and then
connect it to the tree, where the chain spans the VM that has been chosen in
the tree. Moreover, we enable eST and eNEMP to support multiple sources via the modification as follows. The idea is to iteratively add a service tree in the solution
until no tree can reduce the total cost. 
At each iteration, we elect the minimal-cost service tree among all candidate trees rooted at each unused source, run VNFs sequentially on unused VMs, and span all the destinations in $D$. 
To estimate the profit of tree addition, we calculate the total cost of
the current forest with the elected tree, where each destination is spanned and served by the closet tree.
Hence, we add the elected tree and proceed to the next iteration if it can decrease the total cost. 
Otherwise, we output the forest.
Furthermore, a special case with only one Steiner tree
connected with a service chain (denoted by ST in the figures) is also
evaluated.

\subsection{One-Time Deployment}

\begin{figure}[t]
\center
\subfigure[]{\includegraphics[width=4.4cm]{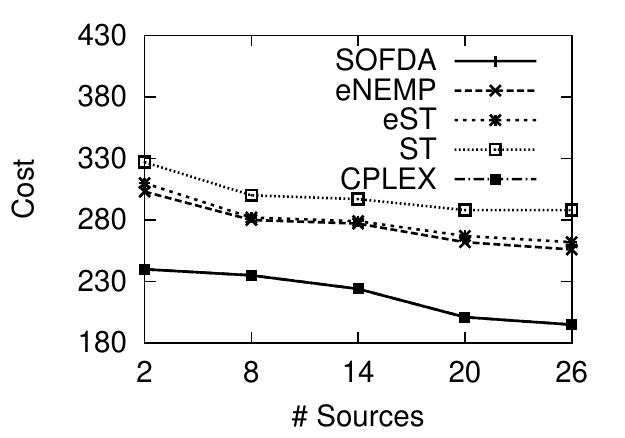}%
\label{fig: cost_num_src_small}}\hfil
\subfigure[]{\includegraphics[width=4.4cm]{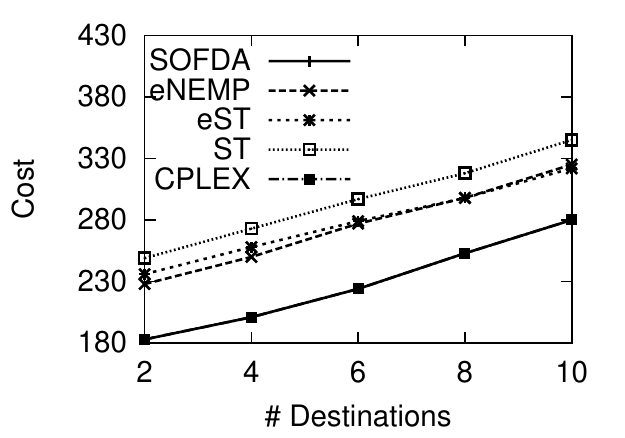}%
\label{fig: cost_num_dst_small}}\\[-10pt]
\subfigure[]{\includegraphics[width=4.4cm]{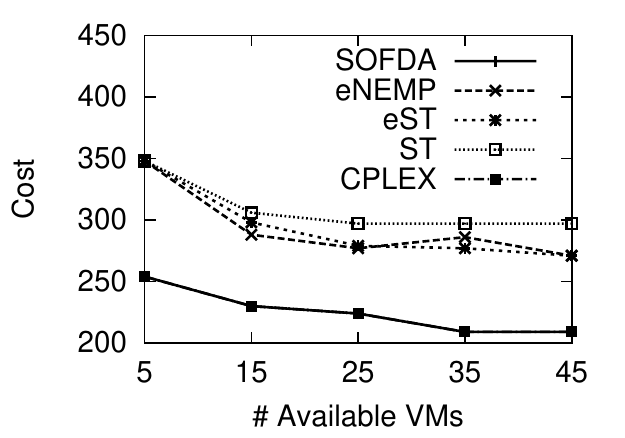}%
\label{fig: cost_num_vm_small}}\hfil
\subfigure[]{\includegraphics[width=4.4cm]{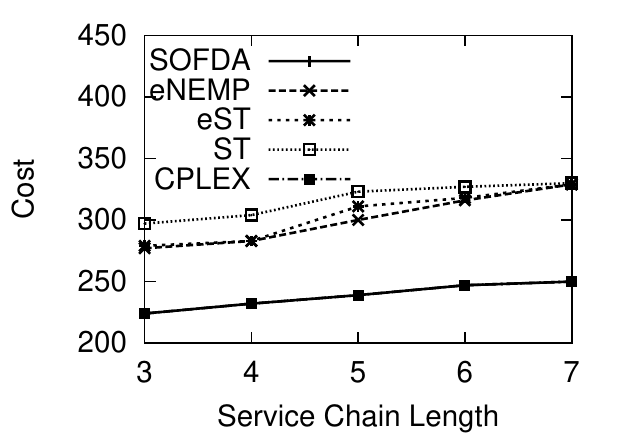}%
\label{fig: cost_num_vm_small}} 
\caption{The impact on cost of (a) the number of sources, (b) the number of
destinations, (c) the number of available VMs, and (d) the service chain
length in Softlayer network.}
\label{fig: simulation 1}
\end{figure}
\begin{figure}[t]
\center
\subfigure[]{\includegraphics[width=4.4cm]{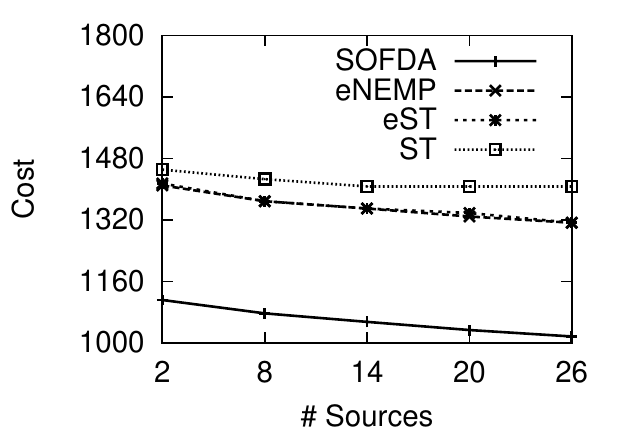}%
\label{fig: cost_num_src_large}}\hfil
\subfigure[]{\includegraphics[width=4.4cm]{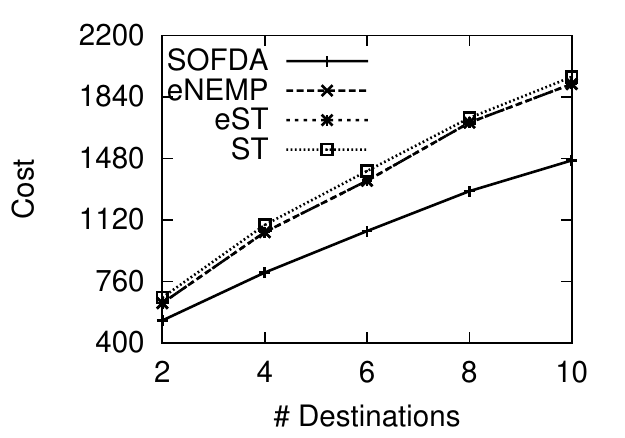}%
\label{fig: cost_num_dst_large}}\\[-10pt]
\subfigure[]{\includegraphics[width=4.4cm]{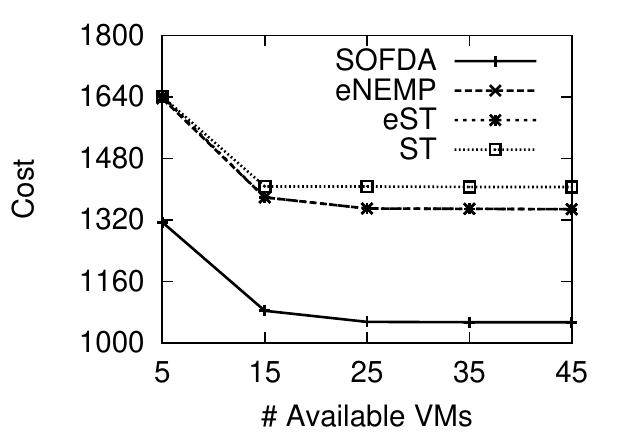}%
\label{fig: cost_num_vm_large}}\hfil
\subfigure[]{\includegraphics[width=4.4cm]{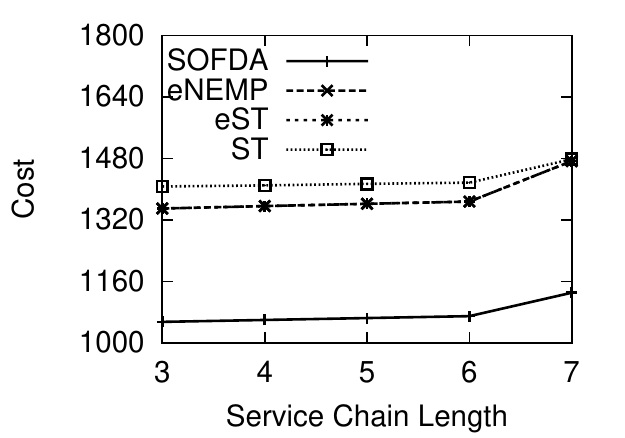}%
\label{fig: cost_num_vm_large}} 
\caption{The impact on cost of (a) the number of sources, (b) the number of
destinations, (c) the number of available VMs, and (d) the service chain
length in Cogent network.}
\label{fig: simulation 2}
\end{figure}

\begin{figure}[t]
\center
\subfigure[]{\includegraphics[width=4.4cm]{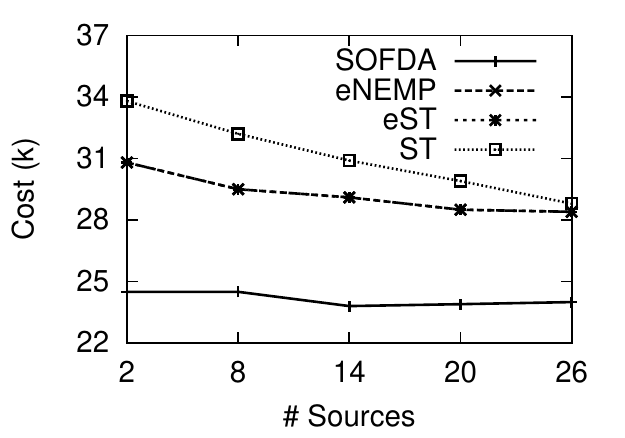}%
\label{fig: cost_num_src_large}}\hfil
\subfigure[]{\includegraphics[width=4.4cm]{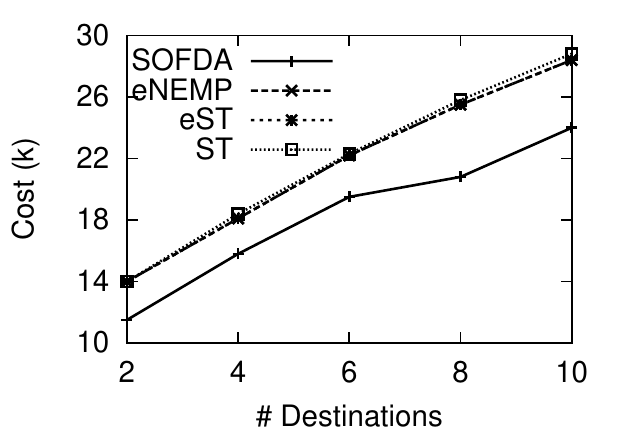}%
\label{fig: cost_num_dst_large}}\\[-10pt]
\subfigure[]{\includegraphics[width=4.4cm]{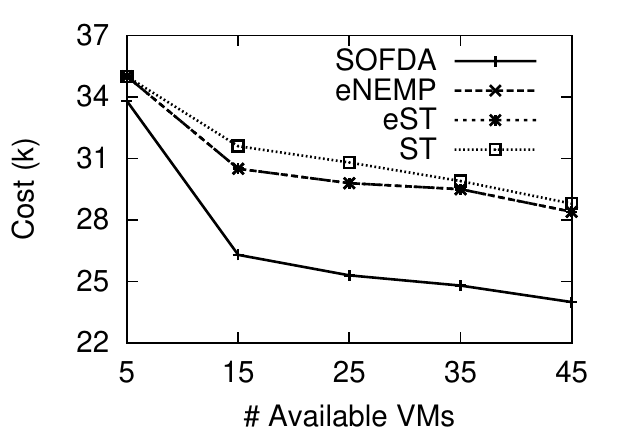}%
\label{fig: cost_num_vm_large}}\hfil
\subfigure[]{\includegraphics[width=4.4cm]{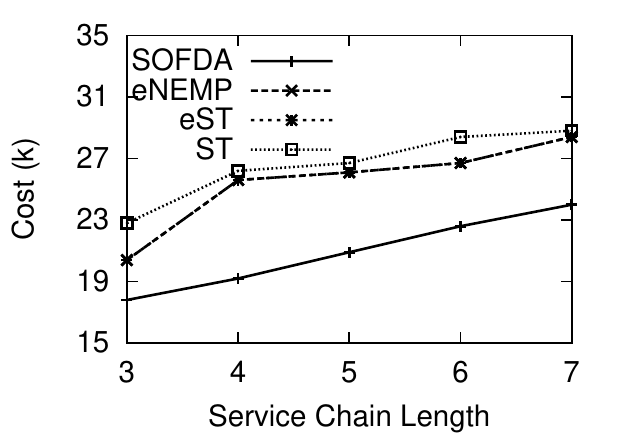}%
\label{fig: cost_num_vm_large}} 
\caption{The impact on cost of (a) the number of sources, (b) the number of
destinations, (c) the number of available VMs, and (d) the service chain
length in the synthetic network by Inet.}
\label{fig: simulation inet}
\end{figure}

We compare the performance of SOFDA, eNEMP, eST, ST, and the optimal solution generated by CPLEX with different numbers of sources, destinations, VMs, and different numbers of demand services. Because SOF is NP-hard, CPLEX is able to find the optimal solutions for small instances, and thus only Softlayer is tested in this case. Figs. \ref{fig: simulation 1} and \ref{fig: simulation 2} manifest that SOFDA is very close to the optimal solutions, and choosing multiple sources effectively reduces the total cost. The improvement in Fig. \ref{fig: simulation 2} and Fig. \ref{fig: simulation inet} is more significant because larger networks (i.e., Cogent and the synthetic network) contains more candidate nodes and links to generate a more proper forest. Since eNEMP and eST do not choose multiple sources and VMs during the multicast routing, they tend to miss many good opportunities for allocating the VMs with small costs to the tree with fewer edges. By contrast, the results indicate that SOFDA effectively reduces the total cost by 30\%. Also, when the number of sources increases, the destinations have more candidate trees to join, and thus the total cost is effectively reduced. However, the total cost increases when the number of destinations grows, because a service tree is necessary to span more destinations. Fortunately, when we have more VMs, there are more candidate machines to deploy VNFs, and the total cost thereby can be reduced.
Fig. \ref{fig: avg num of used VMs} presents the impact of different setup costs. The forest cost increases as the setup cost (i.e., 1x, 3x, ..., 9x) or the length of a demanded service chain (i.e., $|\mathcal{C}|$) grows as shown in Fig. \ref{fig: open cost 2}. Fig. \ref{fig: open cost 1} manifests that the average number of selected VMs in a forest is effectively reduced by SOFDA as the setup cost of a VM increases. Moreover, when the length of a demanded service chain (i.e., $|\mathcal{C}|$) becomes larger, the number of required VM needs to increase in order to satisfy new user requirements. 

Table \ref{table: speed_test} shows the running time of SOFDA with different numbers of sources and network sizes. The running time is less than 2 seconds for small networks, such as the one with 1000 nodes and 2 sources. With $|S|$ and $|V|$ increase, the running time grows, but SOFDA only requires around 19 seconds for the largest case.

\begin{figure}[t]
\center
\subfigure[]{\includegraphics[width=4.4cm]{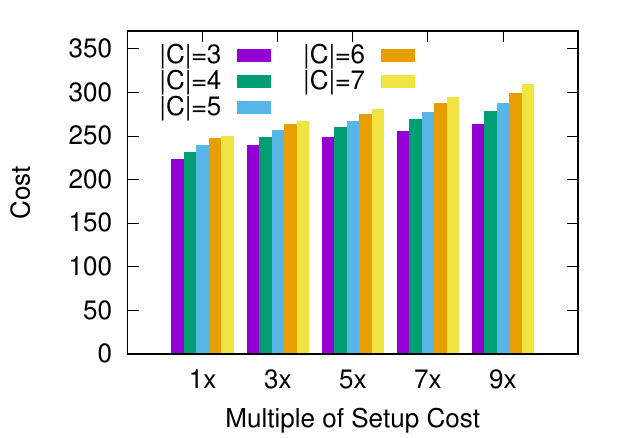}\label{fig: open cost 2}}\hfil
\subfigure[]{\includegraphics[width=4.4cm]{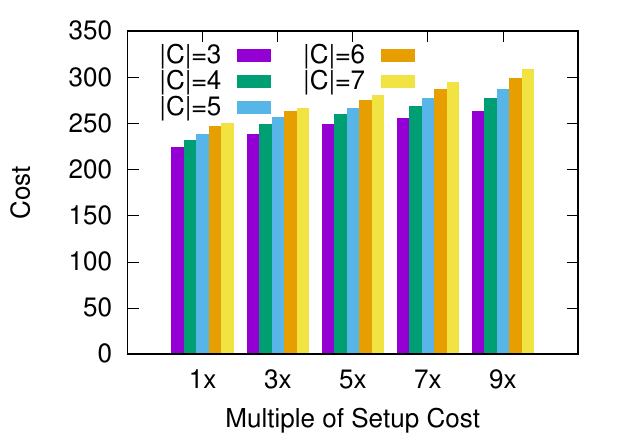}\label{fig: open cost 1}} 
\caption{The impact on (a) cost and (b) average number of used VMs by different multiples of setup cost and service chain length.}
\label{fig: avg num of used VMs}
\end{figure}

\subsection{Online Deployment}
\begin{figure}[t]
\center
\subfigure[]{\includegraphics[width=4.4cm]{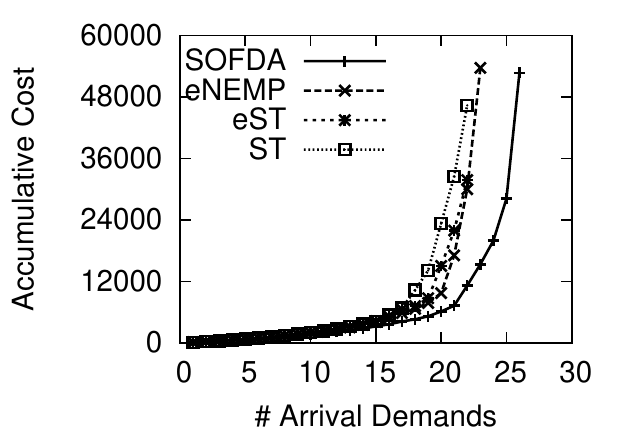}\label{fig: acc
cost 1}}\hfil
\subfigure[]{\includegraphics[width=4.4cm]{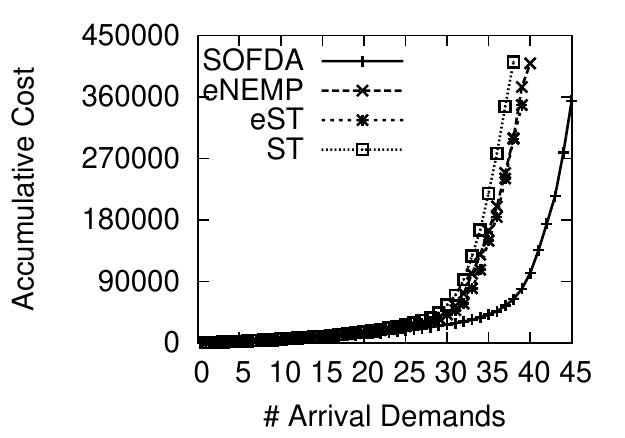}\label{fig: acc
cost 2}} 
\caption{Performance of the online deployment in (a) Softlayer network, and
(b) Congent network.}
\label{fig: simulation 3}
\end{figure}
In the following, we explore the online scenario with the requests
arriving sequentially. The edge costs also grow incrementally due to more
traffic demand. Fig. \ref{fig: simulation 3} presents the accumulative costs
(i.e., the total cost from the beginning to the current time slot) of
different approaches. It manifests that SOFDA outperforms the others because
the existing approaches focus on minimizing the traditional tree cost and
thus tend to miss many good opportunities to deploy the VNFs on a longer
path with sufficient VMs. By contrast, SOFDA carefully examines the edge
costs and node costs and acquires the best trade-off between utilizing more
VMs (leading to a smaller forest) and reducing the number of VMs, especially
when the network load increases.

\begin{table}[t]
\caption{The Running Time of SOFDA (Seconds)}
\label{table: speed_test}
\begin{center}
  \begin{tabular}{|c|c|c|c|c|c|} 
    \hline
    $|V|$ & $|S|=2$ & $|S|=8$ & $|S|=14$ & $|S|=20$ & $|S|=26$ \\ \hline
    1000  & 1.35    & 5.15    & 9.08     & 13.16    & 16.03    \\
    2000  & 1.48    & 5.622   & 9.76     & 13.77    & 17.172   \\
    3000  & 1.76    & 5.84    & 9.84     & 14.26    & 18.81    \\
    4000  & 1.89    & 6.15    & 9.88     & 14.86    & 18.9     \\
    5000  & 2.25    & 6.87    & 10.99    & 15.75    & 19.65    \\
    \hline
  \end{tabular}
\end{center}
\end{table}

\subsection{Implementation}
\begin{figure}[t]
\center
\subfigure[]{\includegraphics[width=3.5cm]{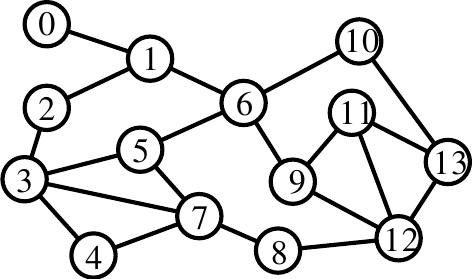}\label{fig:exp 5}}
\caption{Topology of Experimental SDN.}
\label{fig:exp 5}
\end{figure}
\begin{table}[t]
\caption{Evaluations in our experimental SDN network and Emulab}
\label{t:implementation}
\begin{center}
\begin{tabular}{|c|c|c|c|c|}
\hline
\multirow{2}{*}{Algorithms} & \multicolumn{2}{|c|}{Startup Latency} & \multicolumn{2}{|c|}{Re-buffering Time} \\
& \multicolumn{1}{c}{Ours} & \multicolumn{1}{c|}{Emulab} & \multicolumn{1}{c}{Ours} & \multicolumn{1}{c|}{Emulab}\\ \hline
SOFDA & 7.5 s & 5.5 s & 34.0 s & 29.8 s \\ 
eNEMP & 9.0 s & 5.9 s & 39.5 s & 39.0 s \\ 
eST & 10.0 s & 6.2 s & 41.0 s & 45.7 s \\ \hline
\end{tabular}%
\end{center}
\end{table}
To evaluate SOF in real environments, we implement SOFDA in Emulab \cite{Emulab}. The version and build of the Emulab are 4.570 and 03/17/2017, respectively. We create the topology by using NS format defined by Emulab and run Ubuntu 14.04 in each end host. We also deploy an experimental SDN with HP Procurve 5406zl OpenFlow-enabled switches and HP DL380G8 servers, where OpenDaylight is the OpenFlow controller, and OpenStack is employed to manage VMs. To support distributed computation, we run multiple OpenDaylight instances in VMs deployed in different servers and leverage the ODL-SDNi architecture \cite{MultiControllerModel}, which enables inter-controller communications. In addition, SOFDA is implemented as an application on the top of OpenDaylight and relies on OpenDaylight APIs to install forwarding rules into the switches. It also calls OpenStack APIs to launch VM instances, which are enabled VNFs.
The goal is to evaluate the transcoded and watermarked video performance under the environment with limited resources. Our testbed includes 14 nodes and 20 links, where the link capacity is set as 50 Mbps, and each node can support one VNF as explained in Section \ref{sec: SOF problem}. Two nodes are randomly selected as the video sources connecting to Youtube, and the full-HD test video is in 137 seconds encoded by H.264 with the average bit rate as 8 Mbps. Four nodes are randomly selected as destinations playing the videos with the VLC player. The video streams are processed by VNFs, including a video transcoder and a watermarker implemented by FFmpeg before reaching the destinations.
The available bandwidth of each link ranges from 4.5 Mbps to 9 Mbps to emulate the scenario with the network congestion, where the video playback may stall and wait for startup again or re-buffering. During the video playback, we measure the startup latency and total video re-buffering time, which are crucial for user QoE. Table \ref{t:implementation} summarizes the average startup latency and re-buffering time of different approaches. The results manifest that the startup latency of SOFDA is 20 \% and 33 \% shorter than eNEMP and eST, and the video stalling time of SOFDA is 16\% and 21\% smaller than eNEMP and eST, respectively. The experiment results also indicate that SOFDA routes traffic to less congested links compared with eNEMP and eST, and fewer packets thereby are lost. 

\section{Conclusion} \label{sec: conclusion}

In this paper, we investigated a new optimization problem (i.e., SOF) for cloud SDN. Compared with previous studies, the problem is more challenging because both the routing of a forest with multiple trees and the allocation of multiple VNFs in each tree are required to be considered. We proposed a $3\rho _{ST}$-approximation algorithm (SOFDA) to effectively handle the VNF conflict, which has not been explored by previous Steiner Tree algorithms. We also discussed the distributed implementation of SOFDA. Simulation results manifest that SOFDA outperforms the existing ones by over 25\%. Implementation results indicate that SOF can significantly improve the QoE of the Youtube traffic. Since current IP multicast supports dynamic group membership (i.e., each user can join and leave a tree at any time), our future work is to explore the online problem for rerouting of the forest and relocation of VNFs in cloud SDN with a performance guarantee.

\bibliographystyle{IEEEtran}
\bibliography{mybibfile}

\begin{thebibliography}{10}
\providecommand{\url}[1]{#1}
\csname url@samestyle\endcsname
\providecommand{\newblock}{\relax}
\providecommand{\bibinfo}[2]{#2}
\providecommand{\BIBentrySTDinterwordspacing}{\spaceskip=0pt\relax}
\providecommand{\BIBentryALTinterwordstretchfactor}{4}
\providecommand{\BIBentryALTinterwordspacing}{\spaceskip=\fontdimen2\font plus
\BIBentryALTinterwordstretchfactor\fontdimen3\font minus
  \fontdimen4\font\relax}
\providecommand{\BIBforeignlanguage}[2]{{%
\expandafter\ifx\csname l@#1\endcsname\relax
\typeout{** WARNING: IEEEtran.bst: No hyphenation pattern has been}%
\typeout{** loaded for the language `#1'. Using the pattern for}%
\typeout{** the default language instead.}%
\else
\language=\csname l@#1\endcsname
\fi
#2}}
\providecommand{\BIBdecl}{\relax}
\BIBdecl

\bibitem{FutureMediaServiceArchi}
``Setting the future media services architecture,'' \emph{Ericsson Review},
  2015.

\bibitem{Anvato}
\BIBentryALTinterwordspacing
Anvato. [Online]. Available: \url{http://www.anvato.com/}
\BIBentrySTDinterwordspacing

\bibitem{VR1}
Y.~Li and W.~Cai, ``Consistency-aware zone mapping and client assignment in
  multi-server distributed virtual environments,'' \emph{IEEE Trans. Parallel
  Distrib. Syst.}, vol.~26, pp. 1570--1579, 2015.

\bibitem{VR2}
H.~Hu, R.~W.~H. Lau, H.~Hu, and B.~Wah, ``On view consistency in multi-server
  distributed virtual environments,'' \emph{IEEE Transactions on Visualization
  and Computer Graphics}, vol.~20, pp. 1428--1440, 2014.

\bibitem{VR3}
J.~Marsh, M.~Glencross, S.~Pettifer, and R.~Hubbold, ``A network architecture
  supporting consistent rich behavior in collaborative interactive
  applications,'' \emph{IEEE Transactions on Visualization and Computer
  Graphics}, vol.~12, pp. 405--416, 2006.

\bibitem{needForNewBroadcastProduction}
``The need for a new approach to broadcast production,'' {Cisco} White Paper.

\bibitem{IntroduceNFV}
``Network functions virtualisation - introductory white paper,'' {ETSI} White
  Paper.

\bibitem{Netflix}
\BIBentryALTinterwordspacing
Netflix. [Online]. Available: \url{https://www.netflix.com/}
\BIBentrySTDinterwordspacing

\bibitem{AWS}
\BIBentryALTinterwordspacing
{AWS}. [Online]. Available: \url{https://aws.amazon.com/}
\BIBentrySTDinterwordspacing

\bibitem{SurveyCurrentNfvImplementation}
R.~Mijumbi, J.~Serrat, J.~L. Gorricho, N.~Bouten, F.~D. Turck, and R.~Boutaba,
  ``Network function virtualization: State-of-the-art and research
  challenges,'' \emph{IEEE Communications Surveys Tutorials}, vol.~18, pp.
  236--262, 2016.

\bibitem{ServiceChain}
T.~Lukovszki and S.~Schmid, ``Online admission control and embedding of service
  chains,'' in \emph{SIROCCO}, 2015.

\bibitem{NFVchainInOpticalDC-JLT}
M.~Xia, M.~Shirazipour, Y.~Zhang, H.~Green, and A.~Takacs, ``Network function
  placement for {NFV} chaining in packet/optical datacenters,'' \emph{Lightwave
  Technology, Journal of}, vol.~33, pp. 1565--1570, 2015.

\bibitem{ServiceChaining}
T.-W. Kuo, B.-H. Liou, K.~C.-J. Lin, and M.-J. Tsai, ``Deploying chains of
  virtual network functions: On the relation between link and server usage,''
  in \emph{IEEE INFOCOM}, 2016.

\bibitem{TVtrend2016}
``{TV} and media 2016,'' {Ericsson} Consumer and Industry Insight Report.

\bibitem{Crackle}
\BIBentryALTinterwordspacing
Crackle. [Online]. Available: \url{http://www.crackle.com/}
\BIBentrySTDinterwordspacing

\bibitem{PlutoTV}
\BIBentryALTinterwordspacing
Pluto tv. [Online]. Available: \url{http://corporate.pluto.tv/}
\BIBentrySTDinterwordspacing

\bibitem{BenefitOfMulticast}
R.~{Malli}, X.~{Zhang}, and C.~{Qiao}, ``Benefits of multicasting in
  all-optical networks,'' in \emph{All-Optical Networking: Architecture,
  Control, and Management Issues}, 1998.

\bibitem{MulticastAdvantage}
V.~Aggarwal, R.~Caldebank, V.~Gopalakrishnan, R.~Jana, K.~K. Ramakrishnan, and
  F.~Yu, ``The effectiveness of intelligent scheduling for multicast
  video-on-demand,'' in \emph{ACM MM}, 2009.

\bibitem{PIM-SM}
D.~Estrin, D.~Farinacci, A.~Helmy, D.~Thaler, S.~Deering, M.~Handley,
  V.~Jacobson, C.~Liu, P.~Sharma, and L.~Wei, ``Protocol independent
  multicast-sparse mode ({PIM-SM}): Protocol specification,'' 1998, protocol
  specification.

\bibitem{SteinerTreeBestRatio}
J.~Byrka, F.~Grandoni, T.~Rothvo\ss, and L.~Sanit\`{a}, ``An improved
  {LP}-based approximation for {Steiner Tree},'' in \emph{{ACM} STOC}, 2010.

\bibitem{reliableMulticastforSDN}
S.-H. Shen, L.-H. Huang, D.-N. Yang, and W.-T. Chen, ``Reliable multicast
  routing for software-defined networks,'' in \emph{IEEE INFOCOM}, 2015.

\bibitem{multicastTEforSDN}
L.-H. Huang, H.-C. Hsu, S.-H. Shen, D.-N. Yang, and W.-T. Chen, ``Multicast
  traffic engineering for software-defined networks,'' in \emph{IEEE INFOCOM},
  2016.

\bibitem{IntelligentSchedulingMulticastVoD}
V.~Aggarwal, R.~Caldebank, V.~Gopalakrishnan, R.~Jana, K.~K. Ramakrishnan, and
  F.~Yu, ``The effectiveness of intelligent scheduling for multicast
  video-on-demand,'' in \emph{ACM MM}, 2009.

\bibitem{BestEffortPatchingMulticastTrueVoD}
H.~Ma, G.~K. Shin, and W.~Wu, ``Best-effort patching for multicast true {VoD}
  service,'' \emph{Multimedia Tools and Applications}, vol.~26, pp. 101--122,
  2005.

\bibitem{hofmann2005content}
M.~Hofmann and L.~Beaumont, \emph{Content Networking: Architecture, Protocols,
  and Practice}.\hskip 1em plus 0.5em minus 0.4em\relax Morgan Kaufmann, 2005.

\bibitem{ScalableMulticastforSDN}
L.~H. Huang, H.~J. Hung, C.~C. Lin, and D.~N. Yang, ``Scalable and
  bandwidth-efficient multicast for software-defined networks,'' in \emph{IEEE
  GLOBECOM}, 2014.

\bibitem{NFV-multicast}
S.~Q. Zhang, Q.~Zhang, H.~Bannazadeh, and A.~Leon-Garcia, ``Network function
  virtualization enabled multicast routing on {SDN},'' in \emph{IEEE ICC},
  2015.

\bibitem{NFV-Tree}
M.~Zeng, W.~Fang, J.~J. P.~C. Rodrigues, and Z.~Zhu, ``Orchestrating
  multicast-oriented {NFV} trees in inter-{DC} elastic optical networks,'' in
  \emph{IEEE ICC}, 2016.

\bibitem{k-MST03}
K.~Chaudhuri, B.~Godfrey, S.~Rao, and K.~Talwar, ``Paths, trees, and minimum
  latency tours,'' in \emph{{IEEE} FOCS}, 2003.

\bibitem{SDX}
A.~Gupta, L.~Vanbever, M.~Shahbaz, S.~P. Donovan, B.~Schlinker, N.~Feamster,
  J.~Rexford, S.~Shenker, R.~Clark, and E.~Katz-Bassett, ``{SDX}: A software
  defined internet exchange,'' in \emph{ACM SIGCOMM}, 2014.

\bibitem{LocalAlgoInSDNs}
S.~Schmid and J.~Suomela, ``Exploiting locality in distributed {SDN} control,''
  in \emph{HotSDN}, 2013.

\bibitem{applySDNtoTelecomHorizontalOrVertical}
G.~Hampel, M.~Steiner, and T.~Bu, ``Applying software-defined networking to the
  telecom domain,'' in \emph{INFOCOM WKSHPS}, 2013.

\bibitem{MultiControllerModel}
\BIBentryALTinterwordspacing
Inter {SDN} controller communication ({SDNi}). {OpenDaylight} Summit 2015.
  [Online]. Available:
  \url{http://events.linuxfoundation.org/sites/events/files/slides/ODL-SDNi_0.pdf}
\BIBentrySTDinterwordspacing

\bibitem{PDSteinerTreeAlgo}
M.~Bezen\v{s}ek and B.~Robi\v{c}, ``A survey of parallel and distributed
  algorithms for the steiner tree problem,'' \emph{Int. J. Parallel Program.},
  vol.~42, pp. 287--319, 2014.

\bibitem{IPTV}
M.~Cha, P.~Rodriguez, J.~Crowcroft, S.~Moon, and X.~Amatriain, ``Watching
  television over an {IP} network,'' in \emph{ACM SIGCOMM conference on
  Internet measurement}, 2008.

\bibitem{MEC-scenarios}
\BIBentryALTinterwordspacing
``Mobile-edge computing (mec); service scenarios,'' eTSI GS MEC-IEG 004.
  [Online]. Available:
  \url{http://www.etsi.org/deliver/etsi_gs/MEC-IEG/001_099/004/01.01.01_60/gs_MEC-IEG004v010101p.pdf}
\BIBentrySTDinterwordspacing

\bibitem{prejoinIPTV}
F.~M. Ramos, R.~J. Gibbens, F.~Song, P.~Rodriguez, J.~Crowcroft, and I.~H.
  White, ``Reducing energy consumption in {IPTV} networks by selective
  pre-joining of channels,'' in \emph{ACM SIGCOMM Workshop on Green
  Networking}, 2010.

\bibitem{surveyIPTV}
D.~A.~G. Manzato and N.~L.~S. da~Fonseca, ``A survey of channel switching
  schemes for {IPTV},'' \emph{IEEE Communications Magazine}, vol.~51, pp.
  120--127, 2013.

\bibitem{measureCHT-MOD}
L.~C. Yeh, C.~S. Wang, C.~Y. Lin, and J.~S. Chen, ``An innovative application
  over communications-asa-service: Network-based multicast {IPTV} audience
  measurement,'' in \emph{APNOMS}, 2011.

\bibitem{CHT-MOD}
\BIBentryALTinterwordspacing
{CHT-MOD}. [Online]. Available: \url{http://mod.cht.com.tw/}
\BIBentrySTDinterwordspacing

\bibitem{MuVR}
J.~Thomas, R.~Bashyal, S.~Goldstein, and E.~Suma, ``{MuVR}: A multi-user
  virtual reality platform,'' in \emph{IEEE VR}, 2014.

\bibitem{InterconnectedVirtualReality}
\BIBentryALTinterwordspacing
E.~Bastug, M.~Bennis, M.~M{\'{e}}dard, and M.~Debbah, ``Towards interconnected
  virtual reality: Opportunities, challenges and enablers,'' \emph{CoRR}, vol.
  abs/1611.05356, 2016. [Online]. Available:
  \url{http://arxiv.org/abs/1611.05356}
\BIBentrySTDinterwordspacing

\bibitem{MultiuserVR1}
D.~Tsoupikova, K.~Triandafilou, S.~Solanki, A.~Barry, F.~Preuss, and D.~Kamper,
  ``Real-time diagnostic data in multi-user virtual reality post-stroke
  therapy,'' in \emph{ACM SIGGRAPH ASIA}, 2016.

\bibitem{MultiuserVR2}
``The effects of {3D} multi-user virtual environments on freshmen university
  students' conceptual and spatial learning and presence in departmental
  orientation,'' \emph{Computers \& Education}, vol.~94, pp. 228 -- 240, 2016.

\bibitem{MultiuserVR3}
``Learning to work together: Designing a multi-user virtual reality game for
  social collaboration and perspective-taking for children with autism,''
  \emph{International Journal of Child-Computer Interaction}, vol.~6, pp. 28 --
  38, 2015.

\bibitem{LinkWeightSetting}
B.~Fortz and M.~Thorup, ``Optimizing {OSPF/IS-IS} weights in a changing
  world,'' \emph{IEEE J. Sel. Areas Commun}, vol.~20, no.~4, pp. 756--767,
  2002.

\bibitem{LinkWeightSetting2}
D.~Xu, M.~Chiang, and J.~Rexford, ``Link-state routing with hop-by-hop
  forwarding can achieve optimal traffic engineering,'' \emph{IEEE/ACM Trans.
  Netw.}, vol.~19, 2011.

\bibitem{NodeCost}
J.~W. Jiang, T.~Lan, S.~Ha, M.~Chen, and M.~Chiang, ``Joint {VM} placement and
  routing for data center traffic engineering,'' in \emph{IEEE INFOCOM}, 2012.

\bibitem{MulticastReroute1}
X.~Li and M.~J. Freedman, ``Scaling {IP} multicast on datacenter topologies,''
  in \emph{ACM CoNEXT}, 2013.

\bibitem{MulticastReroute2}
``Fast reroute from single link and single node failures for {IP} multicast,''
  \emph{Computer Networks}, vol.~82, pp. 20 -- 33, 2015.

\bibitem{Reroute3}
S.~Brandt, K.~T. Förster, and R.~Wattenhofer, ``On consistent migration of
  flows in {SDNs},'' in \emph{IEEE INFOCOM}, 2016.

\bibitem{Reroute4}
A.~Kabbani, B.~Vamanan, J.~Hasan, and F.~Duchene, ``Flowbender: Flow-level
  adaptive routing for improved latency and throughput in datacenter
  networks,'' in \emph{ACM CoNEXT}, 2014.

\bibitem{Reroute5}
G.~Rétvári, J.~Tapolcai, G.~Enyedi, and A.~Császár, ``{IP} fast reroute:
  Loop free alternates revisited,'' in \emph{IEEE INFOCOM}, 2011.

\bibitem{VMmigration}
H.~Wang, Y.~Li, Y.~Zhang, and D.~Jin, ``Virtual machine migration planning in
  software-defined networks,'' in \emph{IEEE INFOCOM}, 2015.

\bibitem{VMmigration2}
V.~Shrivastava, P.~Zerfos, K.~w.~Lee, H.~Jamjoom, Y.~H. Liu, and S.~Banerjee,
  ``Application-aware virtual machine migration in data centers,'' in
  \emph{IEEE INFOCOM}, 2011.

\bibitem{VMmigration3}
J.~Liu, L.~Su, Y.~Jin, Y.~Li, D.~Jin, and L.~Zeng, ``Optimal {VM} migration
  planning for data centers,'' in \emph{IEEE GLOBECOM}, 2014.

\bibitem{VMmigration4}
``Virtualknotter: Online virtual machine shuffling for congestion resolving in
  virtualized datacenter,'' \emph{Computer Networks}, vol.~67, pp. 141 -- 153,
  2014.

\bibitem{SoftLayer}
\BIBentryALTinterwordspacing
{IBM} cloud: {SoftLayer}. [Online]. Available:
  \url{http://www.softlayer.com/Data-Centers}
\BIBentrySTDinterwordspacing

\bibitem{Cogent}
\BIBentryALTinterwordspacing
Cogent communication. [Online]. Available:
  \url{http://cogentco.com/en/network/network-map}
\BIBentrySTDinterwordspacing

\bibitem{Inet}
H.~Tangmunarunkit, R.~Govindan, S.~Jamin, S.~Shenker, and W.~Willinger,
  ``Network topology generators: Degree-based vs. structural,'' in \emph{ACM
  SIGCOMM}, 2002.

\bibitem{CPLEX}
\BIBentryALTinterwordspacing
{IBM ILOG CPLEX}. [Online]. Available:
  \url{http://www-01.ibm.com/software/commerce/optimization/cplex-optimizer/}
\BIBentrySTDinterwordspacing

\bibitem{ServiceChainWithDeadline}
Y.~Li, L.~T.~X. Phan, and B.~T. Loo, ``Network functions virtualization with
  soft real-time guarantees,'' in \emph{IEEE INFOCOM}, 2016.

\bibitem{Emulab}
\BIBentryALTinterwordspacing
Emulab. [Online]. Available: \url{https://www.emulab.net/}
\BIBentrySTDinterwordspacing

\end{thebibliography}

\appendices





\section{The Proof of Theorem \ref{theo: NP-hard}} \label{sec: hardness}
\begin{proof}
We prove it by a polynomial-time reduction from a variant of the Steiner tree problem, where all the edge costs are positive and satisfy triangular inequality, to the SOF problem. Given any instance $G$ of the Steiner tree problem, we construct a corresponding instance $G'$ of the SOF problem as follows. We first replicate $G$ into $G'$, set $|\mathcal{C}|=1$ in $G'$, and add one source $s$ into $G'$. We let root $r$ as the only VM in $G'$ and nodes in $U$ of $G$ as the destinations in $D$ of $G'$. Root $r$ is connected to $s$ with an edge whose cost is set to an arbitrary value $w>0$ so as to obtain the instance $G'$. In the following, we prove $OPT_{G'}=OPT_G+w$. Because the edge $e_{r,s}$ only exists in $G'$ and the solution of $G'$ must contain a subgraph in $G$, which is also a tree rooted at $r$ and spans all the nodes in $U$, $OPT_{G}\leq OPT_{G'}-w$ holds. In addition, $OPT_{G}\geq OPT_{G'}-w$; otherwise, such a Steiner tree of $G$ plus the edge $e_{r,s}$ becomes a solution of $G'$ with a smaller cost than $OPT_{G'}$. Hence, given $OPT_G$ (or $OPT_{G'}$), we can obtain $OPT_{G'}$ (or $OPT_{G}$) by adding (or removing) the edge $e_{r,s}$. The theorem follows.
\end{proof}


\section{The Proof of Lemma \ref{lemma: triangular inequality}}\label{sec: triangular inequality}
\begin{proof}
Consider any three nodes $a$, $b$, and $c$ in $\mathcal{G}$. Since $\mathcal{G}$ is a complete graph, the edge between $a$ and $b$, the edge between $b$ and $c$, and the edge between $a$ and $c$ form a triangle in $\mathcal{G}$. Clearly, the total cost of the edge between $a$ and $b$ and the edge between $b$ and $c$ must be greater than the cost of the edge between $a$ and $c$; otherwise, the connection cost of the shortest path between $a$ and $c$ in $G$ must be greater than the total connection cost of the shortest path between $a$ and $b$ and the shortest path between $b$ and $c$, which is a contradiction. The lemma follows.
\end{proof}

\section{The Proof of Theorem \ref{theo: SOFDA-SS ratio}}\label{sec: SOFDA-SS ratio}
\begin{proof}
There are two possible cases for the last VM $u$. For the first case, $u$ is one of the last VMs in $\mathcal{F}^{OPT}$, and $u$ is not in the second case. For the first case, let
$\mathcal{F}^{OPT}(u)$ be a subgraph of $\mathcal{F}^{OPT}$ that connects $u$ to all the destinations in $D$. Thus, $c(\mathcal{F}^{OPT}(u))\leq c(\mathcal{F}_{E}^{OPT})$.
It is worthy to note that the subgraph $\mathcal{F}^{OPT}(u)$ may also span $s$ (and even the other last VMs in $\mathcal{F}^{OPT}$) so as to connect $u$ and all the destinations. Hence, the cost of the minimum Steiner tree that spans $u$ and all the destinations in $D$ must be no greater than $c(\mathcal{F}^{OPT}(u))\leq c(\mathcal{F}_{E}^{OPT})$ when $u$ also runs the last VNF in $\mathcal{F}^{OPT}$.
On the other hand, the $k$-stroll problem is NP-Hard and has a 2-approximation algorithm in metric graphs, which satisfy triangular inequality. According to Lemma \ref{lemma: triangular inequality}, $W_{\mathcal{G}}$ follows triangular inequality, and the cost of $W_{\mathcal{G}}$ is thereby no greater than twice of the cost of the shortest walk that visits at least $|\mathcal{%
C}|+1$ distinct nodes, $W_{\mathcal{G}}^{OPT}$, from $s$ to $u$ in $\mathcal{G}$. Since the cost of $W_{\mathcal{G}}^{OPT}$ is equal to the minimum setup and connection costs of the walk that visits at least $|\mathcal{C}|$ VMs from $s$ to $u$ in $G$, the cost of $W_{\mathcal{G}}^{OPT}$ is bounded by the total setup and connection costs of the walk from $s$ to $u$ in $%
\mathcal{F}^{OPT}$. Therefore, the cost of $W_{\mathcal{G}}$ is bounded by $2c(\mathcal{F}^{OPT})$. 


On the other hand, the connection cost of connecting $u$ to all destinations is bounded by $\rho _{ST}\cdot c(\mathcal{F}_{E}^{OPT})$. Thus, the cost of the service overlay forest with the last VM $u$ is bounded by $(2+\rho _{ST})c(\mathcal{F}^{OPT})$ as $u$ runs $f_{|\mathcal{C}|}$ in $\mathcal{F}^{OPT}$. Note that SOFDA-SS constructs a service overlay forest for every possible last VM $u$ and chooses the forest with the minimum cost. Since at least one VM runs $f_{|\mathcal{C}|}$ in $\mathcal{F}^{OPT}$, the cost of the service overlay forest generated by SOFDA-SS 
is bounded by $(2+\rho _{ST})c(\mathcal{F}^{OPT})$. The theorem follows.
\end{proof}

\clearpage
\section{Scenarios with Setup Costs on Sources} \label{sec: source cost}
In the following, we extend SOFDA-SS to support the case with a setup cost assigned to the source. Let $c(s)$ denote the cost to enable source $s$.
Let the cost of the edge between nodes $v_1$ and $v_2$ in $\mathcal{E}$
\begin{subequations}\notag
\begin{align}
c(v_1,v_2)=&\sum_{(a,b) \in P}c((a,b))\\
&+
\begin{cases}
\begin{array}{ll}
c(s)+c(u) & \text{if\ \ \ \ \ \ } v_1=s, v_2=u\\
\ & \text{\ \ \ \ \ or } v_1=u, v_2=s\\
\frac{c(s)+c(u)+c(v_2)}{2} & \text{else if } v_1=s, v_2\neq u\\
\ & \text{\ \ \ \ \ or } v_1=u, v_2\neq s\\
\frac{c(v_1)+c(s)+c(u)}{2} & \text{else if } v_1\neq s, v_2=u\\
\ & \text{\ \ \ \ \ or } v_1\neq u, v_2= s\\
\frac{c(v_1)+c(v_2)}{2} & \text{otherwise},
\end{array}
\end{cases}
\end{align}
\end{subequations}
where $u$ and $P$ denote the last VM and the shortest path between nodes $v_1$ and $v_2$ in $G$, respectively.
Due to a similar reason in Section \ref{sec: one source}, we set the cost of the edges in $\mathcal{E}$ in a similar way but consider the source cost $c(s)$.
The approximation ratio also holds, and the proof is similar to Theorem \ref{theo: SOFDA-SS ratio}.

\section{Pseudo Codes} \label{sec: pseudo code}
\begin{procedure} [h] \small
	\selectfont \caption{Instance Construction of the k-Stroll Problem} \label{Algo: k-stroll Instance Construction}
	\begin{algorithmic} [1]
		\REQUIRE A network $G=(V, E)$, a source $s \in V$, a set of VMs $M\subseteq V$, and a (last) VM $u \in M$\\
		\STATE $\mathcal{V} \leftarrow M \cup \{s\}$;\label{procedure: set V}
		\STATE $\mathcal{E} \leftarrow \{(x,y)|x,y \in \mathcal{V}\}$;\label{procedure: set E}
		\FOR {each edge $e$ in $\mathcal{E}$ between nodes $v_1$ and $v_2$}
			\STATE $P\leftarrow$ the shortest path between $v_1$ and $v_2$;
			\STATE \begin{equation}\notag c(v_1,v_2)\leftarrow\sum_{(a,b) \in P}c((a,b))+\begin{cases}\begin{array}{ll}
					\frac{c(u)+c(v_2)}{2} & \text{if } v_1=s,\\
					\frac{c(v_1)+c(u)}{2} & \text{else if } v_2=s,\\
					\frac{c(v_1)+c(v_2)}{2} & \text{otherwise};
					\end{array}\end{cases}
					\end{equation}
		\ENDFOR
		\STATE return $\mathcal{G}=\{\mathcal{V},\mathcal{E}\}$;
	\end{algorithmic}
\end{procedure}

\begin{procedure} [h] \small
	\selectfont \caption{Identification of the Walk with $|\mathcal{C}|$ VMs} \label{Algo: Walk Identification}
	\begin{algorithmic} [1]
		\REQUIRE A network $G=(V, E)$, a source $s \in V$, a set of VMs $M\subseteq V$, a (last) VM $u \in M$, and the length of the chain of VNFs $|\mathcal{C}|$\\
		\STATE construct an instance $\mathcal{G}=\{\mathcal{V},\mathcal{E}\}$ of the $k$-stroll problem using Procedure \ref{Algo: k-stroll Instance Construction} with input parameters $G$, $s$, $M$, and $u$; \label{procedure 1: construct k-stroll instance}\\
		\STATE obtain a walk $W_\mathcal{G}$ that visits at least $\mathcal{|C|}+1$ distinct nodes from $s$ to $u$ in $\mathcal{G}$ using the 2-approximation algorithm for the metric version of the $k$-stroll problem in \cite{k-MST03};\label{procedure 1: obtain W}\\
		\STATE obtain a walk $W'_\mathcal{G}=(u_1,u_2,\cdots, u_{\mathcal{|C|}+1})$, which visits exactly $\mathcal{|C|}+1$ distinct nodes from $s$ (=$u_1$) to $u$ (=$u_{\mathcal{|C|}+1}$), in $\mathcal{G}$ from $W_\mathcal{G}$ by repeatedly removing a node until $\mathcal{|C|}+1$ nodes remain;\label{procedure 1: obtain W'}\\
		\STATE obtain the walk $W_G=(u_1,u_2,\cdots, u_{\mathcal{|C|}+1})$ from $s$ (=$u_1$) to $u$ (=$u_{\mathcal{|C|}+1}$) by combining the shortest paths between $u_j$ and $u_{j+1}$ for $j=1, 2, \cdots, \mathcal{|C|}$ in $G$;\label{procedure 1: obtain W_G}\\
		\STATE return $W_G=(u_1,u_2,\cdots, u_{\mathcal{|C|}+1})$;
	\end{algorithmic}
\end{procedure}

\begin{algorithm} [h] \small
	\selectfont \caption{$(2+\rho_{ST})$-Approximation Algorithm for the SOF Problem with Single Source}
	\label{Algo: SOF_one_candidate_source}
	\begin{algorithmic} [1]
		\REQUIRE A network $G=\{V,E\}$, a set of destinations $D\subseteq V$, a source $s \in V$, a set of VMs $M\subseteq V$, and a chain of VNFs $\mathcal{C}=(f_1,f_2,\cdots, f_{|\mathcal{C}|})$\\
		\STATE $C \leftarrow \infty$;
		\FOR {each VM $u$} \label{algo: divide v adjust F}
		\STATE obtain the walk $W_G=(u_1,u_2,\cdots, u_{\mathcal{|C|}+1})$ from $s$ (=$u_1$) to $u$ (=$u_{\mathcal{|C|}+1}$) using Procedure \ref{Algo: Walk Identification} with input parameters $G$, $s$, $M$, $u$, and $\mathcal{|C|}$;\label{algo: obtain W_G}\\
		\STATE construct $F_{tmp}$ with cost $C_{tmp}$ by connecting $s$ with $u$ by $W_G$, running $f_1,f_2,\cdots, f_{|\mathcal{C}|}$ at $u_2,\cdots, u_{\mathcal{|C|}+1}$, respectively, and connecting $u$ with all destinations using the approximation algorithm for the Steiner Tree problem in \cite{SteinerTreeBestRatio};\label{algo: construct F_tmp}
		\IF {$C_{tmp} < C$} \label{algo: adjust F}
		\STATE $F \leftarrow F_{tmp}$;
		\STATE $C \leftarrow C_{tmp}$;
		\ENDIF \label{algo: adjust F end VMPDN2}
		\ENDFOR \label{algo: divide v adjust F end}
		\STATE return $F$; \label{algo: return F}
	\end{algorithmic}
\end{algorithm}

\begin{procedure} [h] \small
	\selectfont \caption{Instance Construction of the Steiner Tree Problem} \label{Algo: Steiner Tree Instance Construction}
	\begin{algorithmic} [1]
		\REQUIRE A network $G=(V, E)$, a set of sources $S\subseteq V$, a set of VMs $M\subseteq V$, and a chain of VNFs $\mathcal{C}=(f_1,f_2,\cdots, f_{|\mathcal{C}|})$\\
		\STATE $\mathbb{V}  \leftarrow V \cup \{\hat{s}\} \cup \mathbb{V_S} \cup \mathbb{V_M}$, where $\mathbb{V_S}$ is a set of the duplicate $\hat{v}$ of each $v \in S$, $\mathbb{V_M}$ is a set of the duplicate $\hat{v}$ of each $v \in M$, and $\hat{s}$ is a virtual source;\label{procedure 2: set V}
		\STATE $\mathbb{E}  \leftarrow E \cup \mathbb{E}_{\hat{s}\mathbb{S}} \cup \mathbb{E}_{\mathbb{S}\mathbb{M}} \cup \mathbb{E}_{\mathbb{M}M}$, where $\mathbb{E}_{\hat{s}\mathbb{S}}$ is a set of the edges $(\hat{s},\hat{v}$ for all $\hat{v} \in \mathbb{V_S}$, $\mathbb{E}_{\mathbb{S}\mathbb{M}}$ is a set of the edges $(\hat{v},\hat{u})$ for all $\hat{v} \in \mathbb{V_S}$ and $\hat{u} \in \mathbb{V_M}$, and $\mathbb{E}_{\mathbb{M}M}$ is a set of the edges $(v,\hat{v})$ for all $v \in M$;\label{procedure 2: set E}
		\FOR {each edge $e$ in $\mathbb{E}_{\hat{s}\mathbb{S}} \cup \mathbb{E}_{\mathbb{M}M}$} \label{procedure 2: weight of 0 begin}
		\STATE the weight of $e$ $ \leftarrow$ 0; \label{procedure 2: weight of 0}
		\ENDFOR \label{procedure 2: weight of 0 end}
		\FOR {each edge $(\hat{v},\hat{u})$, where $\hat{v} \in \mathbb{S}$ and $\hat{u} \in \mathbb{M}$, in $\mathbb{E}_{\mathbb{S}\mathbb{M}}$} \label{procedure 2: weight of not 0 begin}
		\STATE the weight of $(\hat{v},\hat{u})$ $\leftarrow$ the cost of the walk that visits $|C|$ VMs from $v \in S$ to $u \in M$ in $G$ using Procedure \ref{Algo: Walk Identification} with input parameters $G$, $v$, $M$, $u$, and $|C|$; \label{procedure 2: weight of not 0}
		\ENDFOR \label{procedure 2: weight of not 0 end}
		\STATE return $\mathbb{G}=\{\mathbb{V},\mathbb{E}\}$;
	\end{algorithmic}
\end{procedure}

\begin{algorithm} [h] \small
	\selectfont \caption{$3\rho_{ST}$-Approximation Algorithm for the SOF problem }
	\label{Algo: SOF}
	\begin{algorithmic} [1]
		\REQUIRE A network $G=\{V,E\}$, a set of destinations $D\subseteq V$, a set of sources $S\subseteq V$, a set of VMs $M\subseteq V$, and a chain of VNFs $\mathcal{C}=(f_1,f_2,\cdots, f_{|\mathcal{C}|})$\\
		\STATE construct an instance $\mathbb{G}=\{\mathbb{V},\mathbb{E}\}$ of the Steiner Tree problem using Procedure \ref{Algo: Steiner Tree Instance Construction} with input parameters $G$, $S$, $M$, and $\mathcal{C}$; \label{algo: construct Steiner tree instance}\\
		\STATE obtain a Steiner tree $\mathbb{T}$ in $\mathbb{G}$ using the $\rho_{ST}$-approximation algorithm for the Steiner Tree problem in \cite{SteinerTreeBestRatio}; \label{algo: construct Steiner tree}
		\STATE $F \leftarrow \emptyset$;\label{algo: initial F}
		\FOR {each pair of $v \in S$ and $u \in M$ in $G$ such that $\hat{v}$ (duplicate of $v$) is the parent of $\hat{u}$ (duplicate of $u$) in tree $\mathbb{T}$ with root $\hat{s}$} \label{algo: add walks to F begin}
		\STATE obtain the walk $W$ that visits $|C|$ VMs between $v$ and $u$ in $G$ using Procedure \ref{Algo: Walk Identification} with input parameters $G$, $v$, $M$, $u$, and $|C|$;\label{algo: obtain W}
		\STATE add $W$ to $F$ using Procedure \ref{Algo: F Augment} with input parameters $F$, $W$, and $\mathcal{C}$; \label{algo: add W to F}
		\ENDFOR \label{algo: algo: add walks to F end}
		\STATE add all VMs, switches, and links in $\mathbb{T} \cap G$ to $F$;
		\STATE return $F$; \label{algo: return F}
	\end{algorithmic}
\end{algorithm}

\begin{procedure} [h] \small
	\selectfont \caption{Augment of Service Overlay Forest} \label{Algo: F Augment}
	\begin{algorithmic} [1]
		\REQUIRE A service overlay forest $F$, a walk $W_G$ in $G$, and a chain of VNFs $\mathcal{C}=(f_1,f_2,\cdots, f_{|\mathcal{C}|})$\\
		\STATE obtain $W=(u_1,u_2,\cdots, u_n)$ from $W_G=(v_1,v_2,\cdots, v_n)$, where $u_i$:
		\begin{itemize}
		\item is the (clone of) VM $v_i$ running a VNF in $F$ if $v_i$ is a VM in $G$ and there is already (one clone of) VM $v_i$ running a VNF in $F$ and
		\item is the newly-added (clone of) VM (or a switch) $v_i$ in $F$ otherwise;
		\end{itemize}  \label{procedure 4: add W to F begin}
		\STATE augment $F$ with $W$;\label{procedure 4: add W to F end}
		\IF {$W$ experiences the \emph{VNF conflict} with at least one walk in $F$}
		\FOR {each walk $W_k$ that experiences the \emph{VNF conflict} with $W$ by backtracking $W$} \label{algo: resolve conflict begin}
		\IF {$W$ (bewteen source $s$ and VM $v$) experiences the \emph{VNF conflict} with $W_k$ (bewteen source $s_1$ and VM $v_1$) at VM $u$ for the first time and the sequence number of the VNF ($f_j$) on $W$ is not greater than that of the VNF ($f_i$) on $W_k$ at VM $u$}
		\label{algo: resolve conflict if 1 begin}
		\STATE attach $W$ to $W_k$ through $u$ by changing $W$ to the concatenation of the sub-walk of $W_k$ from $s_1$ to $u$ (on which $f_1,f_2,\cdots,f_i$ are run in sequence) and the sub-walk of $W$ from $u$ to $v$ (on which $f_{i+1},f_{i+2},\cdots,f_{|C|}$ are run in sequence); \label{algo: resolve conflict if 1}
		\STATE return $F$;
		\ELSIF {there exists another VM $w$ such that $W$ experiences the \emph{VNF conflict} with $W_k$ at $w$ and the sequence number of the VNF ($f_h$) at $w$ on $W_k$ is not smaller than that of the VNF ($f_j$) at $u$ on $W$}
		\STATE attach $W$ to $W_k$ through $w$ by changing $W$ to the concatenation of the sub-walk of $W_k$ from $s_1$ to $w$ (on which $f_1,f_2,\cdots,f_h$ are run in sequence), the sub-walk of $W$ from $w$ to $u$, and the sub-walk of $W$ from $u$ to $v$ (on which $f_{h+1},f_{h+2},\cdots,f_{|C|}$ are run in sequence); \label{algo: resolve conflict if 2-1}
		\STATE return $F$;\label{algo: resolve conflict if 2-2}
		\ELSE
		\STATE attach $W_k$ to $W$ through $u$ by changing $W_k$ to the concatenation of the sub-walk of $W$ from $s$ to $u$ (on which $f_1,f_2,\cdots,f_j$ are run in sequence) and the sub-walk of $W_k$ from $u$ to $v_1$ (on which $f_{j+1},f_{j+2},\cdots,f_{|C|}$ are run in sequence);
		\ENDIF \label{algo: resolve conflict if end}
		\ENDFOR \label{algo: resolve conflict end}
		\ENDIF
		\STATE return $F$;
	\end{algorithmic}
\end{procedure}

\begin{table}[h]
\small
\caption{Table of Notations}
\label{table: notation}
\begin{center}
  \begin{tabular}{ | c | l | } 
    \hline
    $G$ & the input network\\ \hline
    $V$ & the set of nodes in the network\\ \hline
    $M$ & the set of available VMs in the network\\ \hline
    $U$ & the set of switches in the network\\ \hline
    $S$ & the set of sources in the network\\ \hline
    $D$ & the set of destinations in the network\\ \hline
    $c(\cdot)$ & the cost function\\ \hline
    $\mathcal{C}$ & the demanded service chain\\ \hline
    $f_i$ & the $i^{th}$ VNF in chain $\mathcal{C}$\\ \hline
    $\mathcal{F}^{OPT}$ & the optimal service forest\\ \hline
    $\mathcal{F}^{OPT}_E$&  the set of used edges in the optimal forest\\ \hline
    $\mathcal{F}^{OPT}_M$& the set of used VMs in the optimal forest\\ \hline
    \multirow{2}{*}{$\mathcal{F}^{OPT}(u)$} & the subgraph of $\mathcal{F}^{OPT}$ which connects\\
    & node $u$ and all the destinations\\ \hline
    $D^{OPT}_v$ & the set of destinations whose source is $v$ in $\mathcal{F}^{OPT}$\\ \hline
    \multirow{2}{*}{$m^{OPT}_v$} & the representative last VM among all last VM\\ & on the walks from source $v$ in $\mathcal{F}^{OPT}$\\ \hline
    \multirow{2}{*}{{$T_v$}} & the Steiner tree that spans $m^{OPT}_v$ and the \\ & destinations in $D^{OPT}_v$ for source $v$ in $G$\\ \hline
    $\mathcal{G}$ & the constructed instance of $k$-stroll problem\\ \hline
    $\mathcal{V}$ & the nodes in $\mathcal{G}$\\ \hline
    $\mathcal{E}$ & the edges in $\mathcal{G}$\\ \hline
    $\mathcal{A}$ & the 2-approximation for $k$-stroll problem\\ \hline
    $W_\mathcal{G}$ & the output walk in $\mathcal{G}$ by $\mathcal{A}$\\ \hline
    \multirow{2}{*}{$W'_\mathcal{G}$} & the walk that visits exactly $|\mathcal{C}|$ distinct nodes \\ & in $\mathcal{G}$ obtained from $W_\mathcal{G}$\\ \hline
    $W_G$ & the walk in $G$ that visits $|\mathcal{C}|$ VMs in $G$\\ \hline
    $\mathbb{G}$ & the constructed instance of Steiner tree problem\\ \hline
    $\mathbb{V}$ & the set of nodes in $\mathbb{G}$\\ \hline
    $\mathbb{E}$ & the set of edges in $\mathbb{G}$\\ \hline
    $\hat{s}$ & the virtual source in $\mathbb{G}$\\ \hline
  \end{tabular}
\end{center}
\end{table}

\end{document}